\newcommand{\be}{\begin{equation}}
\newcommand{\ee}{\end{equation}}
\tikzset{every node/.style={font=\tiny}}
\def\k{{\rm k}}
\let\oldfootnote\footnote
\def\footnote{\ifhmode\unskip\fi\oldfootnote}
\newcommand{\nn}{\nonumber}
\newcommand{\Idm}{\mbox{1\kern -0.5ex I}}
\newcommand{\BR}{\mathbb{R}}
\newcommand{\BC}{\mathbb{C}}
\newcommand{\BK}{\mathbb{K}}
\newcommand{\BZ}{\mathbb{Z}}
\newcommand{\bl}{\bar{\lambda}}
\newcommand{\bL}{\bar{\Lambda}}
\newcommand{\bi}{\bar{i}}
\newcommand{\bbe}{\bar{e}}
\newcommand{\bv}{\mbox{Bil}}
\newcommand{\av}{\mbox{Alt}}
\newcommand{\sv}{\mbox{Sym}}
\newcommand{\qv}{\mbox{Quad}}
\newcommand{\Tprod}[3]{\mathop{#1 #2}\displaylimits_{#3}}
\newcommand{\BAprod}[2]{\mathop{#1 #2}\displaylimits_A} 
\newcommand{\BFprod}[2]{\mathop{#1 #2}\displaylimits_F} 
\newcommand{\w}{\wedge}
\newcommand{\JJF}{\mathop{\rfloor\,}\displaylimits_F}
\newtheorem{theorem}{Theorem}[section]
\newtheorem{lemma}[theorem]{Lemma}
\newtheorem{proposition}[theorem]{Proposition}
\newtheorem{cor}[theorem]{Corollary}
\newtheorem{rem}[theorem]{Remark}
\newtheorem{definition}[theorem]{Definition}
\newtheorem*{note*}{Note}
\newtheorem{note}[theorem]{Note}
\def\QED{\hfill \rule{2.5mm}{2.5mm}
\vspace{5pt}\noindent }
\DeclareMathOperator{\sgn}{sgn}
\DeclareMathOperator{\cl}{Cl}
\DeclareMathOperator{\End}{End}
\DeclareMathOperator{\T}{T}
\DeclareMathOperator{\Rad}{Rad}
\DeclareMathOperator{\Id}{Id}
\begin{document}
\title{On the bundle of Clifford algebras over the space of quadratic forms}
\author{Arkadiusz Jadczyk\\
Laboratoire de Physique Th\'{e}orique, Universit\'{e} de Toulouse III 
and\\
Ronin Institute, Montclair, NJ 07043}
\providecommand{\keywords}[1]{\textbf{\textit{Index terms---}} #1}
\keywords{tensor algebra, Clifford algebra, exterior algebra, Chevalley's isomorphism, gauge tranformations }
\maketitle
\begin{abstract}
For each quadratic form $Q\in \qv(V)$ on a vector space over a field $\BK,$ we can define the Clifford algebra $\cl(V,Q)$ as the quotient $\T(V)/I(Q)$ of the tensor algebra $\T(V)$ by the two-sided ideal generated by expressions of the form $x\otimes x-Q(x),\, x\in V.$ In the present paper we consider the whole family $\{\cl(V,Q):\, Q\in \qv(V)\}$ in a geometric way as a $\BZ_2$-graded vector bundle over the base manifold $\qv(V).$ Bilinear forms $F\in\bv(V)$ act on this bundle providing natural bijective linear mappings $\bl_F$ between different Clifford algebras $\cl(V,Q).$ Alternating  (or antisymmetric) forms induce vertical automorphisms, which we propose to consider as `gauge transformations'. We develop here the formalism of N. Bourbaki, which generalizes the well known Chevalley's isomorphism $\cl(V,Q)\rightarrow\End(\bigwedge(V))\rightarrow\bigwedge(V).$ In particular we realize the Clifford algebra twisting gauge transformations induced by antisymmetric bilinear forms as exponentials of contractions with elements of $\bigwedge^2(V^*)$ representing these forms. Throughout all this paper we intentionally avoid using the so far accepted term ``Clifford algebra of a bilinear form'' (known otherwise as ``Quantum Clifford algebra"), which we consider as possibly misleading, as it does not represent any well defined mathematical object. Instead we show explicitly how any given Clifford algebra $\cl(Q)$ can be naturally realized as acting via endomorphisms of any other Clifford algebra $\cl(Q')$ if $Q'=Q+Q_F,\, F\in \bv(V)$ and $Q_F(x)=F(x,x).$ Possible physical meaning of such transformations is also mentioned.
\end{abstract}
\thanks{This paper is dedicated to the memory of Zbigniew Oziewicz. }
\maketitle
\section{Introduction}
This paper was inspired by my studies of previous works dealing with Clifford algebra of multivectors for general, not necessarily symmetric or non-degenerate, bilinear forms. Here I will mention only a few of these works that were the main basis of my own research: Z. Oziewicz \cite{oziewicz97}, R. Ab{\l}amowicz and P. Lounesto \cite{ablamowiczlounesto}, B. Fauser \cite{fauser97}, R. Ab{\l}amowicz and B. Fauser \cite{ablamowiczfauser}. \footnote{More relevant references can be found in the cited papers, and also in a more recent paper by R. Ab{\l}amowicz et al. \cite{ablam2014}.} All these works are based on Chevalley's construction published in 1954 - cf. Ref. \cite[p. 102]{chevalley1996}. Chevalley is using there an explicitly nonsymmetric bilinear form to realize an algebra homomorphism from the Clifford algebra $\cl(V,Q)$ of a quadratic form $Q$ on a finite-dimensional vector space $V$ over an arbitrary field $\BK$ into the algebra of endomorphisms $\End(\bigwedge(V)).$ The specially constructed bilinear form (called $B_0$ in Ref. \cite{chevalley1996}) is explicitly defined to be non-symmetric in order to cover the case of characteristic $2.$ Chevalley himself did not develop this idea any further. But it was subsequently developed a great deal (probably with his participation) in the 1959 algebra textbook ``Alg\`{e}bre, Chapitre 9, Formes Sesquilin\'{e}ares et Formes Quadratiques''  \cite{bourbaki1959} by N. Bourbaki. This particular volume of ``Algebra" by N. Bourbaki is the only one that was not translated into English and, as it seems to me, is totally unknown in the Cliffordian community. In the present paper I am presenting the relevant part from Bourbaki, developing it further so as to make it usable for applications of Clifford algebras, and in particular of `Clifford algebras of multivectors',  to physics. In doing it, I am trying to keep as much generality as possible, therefore, most of the time, not requiring the characteristic to be different from $2$. Much of the constructions developed in the present paper can be done for general modules over rings, but I am deliberately restricting myself to vector spaces, usually not requiring them to be finite-dimensional (unless specifically mentioning otherwise). \footnote{In applications to physics, when discussing Clifford algebras,  we usually need only finite-dimensional vector spaces. A possible exception is the application of Clifford algebras in the discussion of Canonical Anticommutation Relations (CAR) in quantum field theory, but in this case we need functional analysis and $C^*$ algebras rather than pure  algebraic constructions with infinite-dimensional vector spaces equipped with Hamel bases.}

One important distinctive feature of the Bourbaki approach consists of the use of operators $e_x,\,x\in V$ of left multiplication, and $\BZ_2$-graded derivations $i_f,\, f\in V^*$ already at the level of the tensor algebra $\T(V),$ where $i_f$ is defined recursively through $i_f(\mathbf{1})=0,$ $i_f(x\otimes u)=f(x)u- x\otimes i_f(u)$ for $x\in V,\,u\in \T(V).$ \footnote{Cf.  \cite[\S 9.2]{bourbaki1959}. The antiderivation $i_f$ acting on the tensor algebra can be also found in Ref. \cite[Ch. 2.2.9]{meinrenken2013}.} Usually, $\BZ_2$ derivations (antiderivations) are introduced only at the level of the exterior or Clifford algebra, that is after passing to the quotient $\T(V)/I(Q),$ where $I(Q)$ is the two-sided ideal generated by the expressions $x\otimes x -Q(x)$, $x\in V,$ where $Q$ is a quadratic form ($Q=0$ for the exterior  algebra). The point is that $\T(V)$ is naturally $\BZ$-graded, and the reduction to $\BZ_2$-gradation  may seem to be somewhat artificial. And yet the Bourbaki original approach allows for the construction of natural linear mappings $\lambda_F:\T(V)\rightarrow \T(V)$ that map every two-sided ideal $I(Q)$ to $I(Q-Q_F)$, where, for any bilinear form $F\in \bv(V)$,  $Q_F$ is the quadratic form $Q_F(x)=F(x,x)$. I consider Bourbaki's mapping $\lambda_F:\T(V)\rightarrow \T(V)$  to be the most important tool for developing the subject of deformations of Clifford algebra products.\footnote{While it is possible to consider all Clifford algebras as deformations of one algebra, the exterior algebra, invoking the tensor algebra allows us to have a 'bird view' of the whole structure: all Clifford algebras, including the exterior algebra, have one `mother', and this mother is the tensor algebra.} For $x\in V$ and $F\in \bv(V),$\, one first defines $i_x^F$ as $i_f$ for the linear form $f$ defined as $f(y)=F(x,y).$ Then $\lambda_F:\T(V)\rightarrow \T(V)$ is defined recursively as $\lambda_F(\mathbf{1})=\mathbf{1}$ on $\T_0(V),$ and $\lambda_F(x\otimes u)=x\otimes\lambda_F(u)+i_x^F(\lambda_F(u)).$   Using $\lambda_F$  way we can ``travel'' between different Clifford algebras already at the level of $\T(V)$. In fact, one can define the analogue of the Chevalley map, I call it $\Lambda_F: \T(V)\mapsto\End(\T(V)),$ already at the level of the tensor algebra. Then $\lambda_F(u)=\Lambda_F(u)(\mathbf{1})$, $\mathbf{1}\in \T(V),$ in a full analogy to Chevalley's construction of the representation of a Clifford algebra by endomorphisms of the exterior algebra, and then realizing the Clifford product within the exterior algebra. All these operators $e_x,i_f, i_x^F,\lambda_F,\Lambda_F$ pass to the quotient and define corresponding operators within and between Clifford algebras.\footnote{The operators $\bi_x^F,\bar{e}_x,\bL_x^F,\bl_x^F$ can be also found in Ref. \cite[Ch. 5.7]{Northcott2009}, denoted there as $\Delta_x,L_x,\Lambda_x,\Omega(x)$ resp. } In this paper I am following Bourbaki's convention and denote the descendants of these operators, acting at the level of Clifford algebras,  with a bar $\bar{e}_x,\bi_f,\bi_x^F,\bl_F,\bL_F$.\footnote{Bourbaki's text is not exactly  consistent here, because $e_x$ is denoted the same way for $\T(V)$, where $e_x(u)=x\otimes u$, and for exterior or Clifford algebra, where $e_x(u)=x\wedge u$ or $e_x(u)=xu.$ } It is because of its importance that the explicit form of $\lambda_F$ is developed in Proposition \ref{prop:ln}, and then $\lambda_F$ is shown to be an exponential of an analogue of a two-fermion annihilation operator in Sec. \ref{sec:lexp}.\footnote{In characteristic $\neq 0,$ instead of the usual exponential power series, an exponential series with  divided powers is given.} Later on, in Sec. \ref{sec:aaf}, at the level of Clifford (or exterior) algebras, and for $F$ restricted to be an alternating  form $A\in \av(V)$, $\bl_A$ is shown to be the exponential of $\bi_{A^*},$ where $A^*\in\bigwedge^2 (V^*)$ represents $A$ via the duality $\langle A^*,x\wedge y\rangle=A(x,y)$, cf. Eq. (\ref{eq:blexp}). This allows us to write the formula for a deformation of a Clifford product by an alternating  form as (cf. Eq. (\ref{eq:xAv})):
\be \BAprod{x}{u}=e^{\bi_{A^*}}\left(x\,e^{-\bi_{A^*}}(u)\right),\, x\in V,\, u\in \cl(V,Q),\nn\ee
which I propose here instead of the ``Wick isomorphism'' based on the exponential of an element of $\bigwedge^2(V) $ suggested in Ref. \cite[Sec. 3]{ablamowiczfauser}.\footnote{Cf. Note \ref{n:wick}} \\
More generally, and in an arbitrary characteristic, we can use the mapping $\bl_F$ to deform a given Clifford multiplication by an arbitrary bilinear form $F$:
\be \BFprod{x}{u}=\bl_F\left(x\bl_{F}^{-1}(u)\right),\, x\in V,\, u\in\cl(V,Q),\ee
which is usually written on the background of the exterior algebra (the case of $Q=0$) as
\be \BFprod{x}{u}=\bl_F\left(x\wedge\bl_{F}^{-1}(u)\right),\, x\in V,\, u\in\bigwedge(V).\ee
Since $\bl_F$ has the exponential property $\bl_{F+G}=\bl_F\circ\bl_G,$ when $F=g+A$\footnote{Here we use the notation of Ref. \cite{ablamowiczfauser}.}, with $g$ symmetric and $A$ antisymmetric\footnote{In characteristic $\neq 2$ antisymmetric and alternating  bilinear forms coincide.}, we get
\begin{eqnarray*} \BFprod{x}{u}&=&\bl_F(x\wedge\bl_{F}^{-1}(u))=\bl_A\left(\bl_g(x\wedge\bl_g(\bl_A(u))\right)\\
&=&\bl_A(x\cdot_g\,\bl_{A}^{-1}(u))=x\wedge u+\bi_x^g(u)+\bi_x^A(u).\end{eqnarray*}
The product defined by $x\cdot_F\,u$ is a twisted (with respect to $x\cdot_g\, u$) representation of the Clifford algebra $\cl(V,Q_g)$ on the exterior algebra $\bigwedge(V).$ In Ref. \cite{ablamowiczlounesto} it is denoted $\cl(V,F)$ and referred to  as  ``Quantum Clifford algebra''. Yet it is just a different but equivalent realization of the same Clifford algebra $\cl(Q_g)$, and therefore  it has the same irreducible representations (ungraded and $\BZ_2$ graded) - cf. Sec. \ref{sec:reps}. In particular the twisted representation of the Clifford algebra $\cl(V,Q_g)$ for $g$ of signature $(2,2)$, twisted by a nontrivial antisymmetric form $A$, on its $8$-dimensional left ideal,  suspected to be irreducible in Ref. \cite{ablamowiczlounesto}, turns out to be reducible.\footnote{I am indebted to R. Ab{\l}amowicz for an informative and extensive discussion of this subject.}\\

It is instructive to think of the family of Clifford algebras $\{\cl(V,Q):Q\in \qv(V)\}$ as a $\BZ_2$-graded vector bundle $\cl(V)$ over the space $\qv(V)$ of all quadratic forms on $V$. While each fiber $\cl(V,Q)$ carries an algebra structure, we will assign an important role to the maps $\bl_F,\, F\in \bv$, that are linear bijections preserving the $\BZ_2$-gradation, but not algebra homomorphisms. The additive group of the vector space $\bv(V)$ acts on the base space via $F:Q\mapsto Q+Q_F$, and it acts on the bundle space $\cl(V)$ via the maps $\lambda_F$, mapping fibers onto fibers. The elements from the subgroup $\av(V)\subset \bv(V)$ do not move the points of the base and transform each fiber $\cl(V,Q)$ into itself, deforming the product within each algebra. In the theory of fiber bundles transformations of the bundle space that map bijectively fibers into fibers, and therefore induce a transformation of the base manifold, are called bundle automorphisms. Bundle automorphisms that do not move base points, so called ``vertical automorphisms", are often referred to as {\it gauge transformations}, more precisely, as {\it global active
gauge transformations\,} - cf. \cite[Ch. 3.2]{bleecker2005} and \cite[Appendix H]{percacci1986}. Instead of considering the usual Dirac equation, where the Clifford algebra acts on an irreducible graded ``spin module'' one considers the so called Ivanenko-Landau-K\"{a}hler equation \cite{obukhov1985,obukhov1994}, where spinors are represented by differential forms - elements of the exterior bundle of the cotangent space. The idea of representing spinors as differential forms has its clear physical justification that was stated explicitly by P. Lounesto in the following form (cf. Ref. \cite[p. 145]{lounesto2001}):
\begin{quotation}
``\dots However the physical justification of the theory of spin manifolds could be questioned on the following basis: why should we need to know the global properties of the universe if we want to explore the local properties of a single electron?''
\end{quotation}
Lounesto was at the same time  well aware of the fact that choosing a global minimal graded ideal of the Clifford algebra bundle over space-time is questionable (cf. Ref. \cite[p. 145]{lounesto2001}):
\begin{quotation}
``\dots In curved manifolds it is more appropriate to use abstract representation modules as spinor spaces and not minimal left ideals [not even subalgebras] of Clifford algebra. The injection ties these spaces together in a manner that singles out special directions in $\BR^{1,3}.$''
\end{quotation}
The only acceptable solution to these problems is by using the whole exterior algebra of differential forms, as in Ivanenko-Landau-K\"{a}hler equation on a curved spacetime manifold $M$, where gravitation is described via teleparalellism, along the lines outlined in Ref. \cite{obukhov2017}. There are at least two ways in which the deformations of the Clifford algebra action by antisymmetric forms can enter into field equations for matter fields in a gravitational background. The first way is by twisting the action of the Dirac gamma matrices $\gamma^\mu$ on the exterior algebra: $\gamma^\mu\rightarrow e^{\bi_{A^*}}\gamma^\mu e^{-\bi_{A^*}}.$ If $A$ is an antisymmetric form that does not depend on space-time coordinates $x^\mu$, then this can be compensated by redefining the matter field $\Phi(x)\in \bigwedge(T_x^*(M))$ via $\Phi\rightarrow e^{\bi_{A^*}}\Phi.$ When $A$ depends on space-time coordinates, we then get extra terms in the Ivanenko-Landau-K\"{a}hler equation resulting from the derivatives of $A(x)$. We can then treat $A$ as a non-minimal interaction in matter equations, or, alternatively, we can consider $\Phi(x)\rightarrow e^{\bi_{A^*(x)}}\Phi(x)$ as a local gauge transformation, in which case we should expect appearance of a certain non-closed $3$-form $\Psi$ in the equation, that transforms under these gauge transformations as $\Psi\rightarrow \Psi+dA.$ These ideas about a possible physical significance of twisted Clifford algebra products are in a need of further investigation, and they complement those already suggested in Refs. \cite{fauser97,ablamowiczfauser}.

\subsection{General setup}
A reasonably general formulation of the theory of Clifford algebras starts with the definition of the Clifford algebra of a module over a commutative ring, equipped with a quadratic form,  see, e.g., Ref. \cite[p. 139]{bourbaki1959}. Here we will not aim for such a generality, and we restrict ourselves to Clifford algebras over a vector space $V$ over a field $\BK,$ equipped with a quadratic form $Q.$\footnote{A vector space is a (projective) module over a (commutative) field.} Unless explicitly stated we will not assume $V$ to be finite dimensional.  Initially, following Ref. \cite{bourbaki1959} we will not even demand for $\BK$ to be of characteristic $\neq 2.$ In electrodynamics, gauge transformations are implemented via exponentials of the type $\exp(i\phi),$ where $\phi$ is a real scalar. These exponentials act on complex wave functions changing their phase. In our case we have exponentials of the type $\exp(\bi_{A^*})$, where $A$ is an antisymmetric bilinear form. These exponentials twist algebra products in Clifford algebras $\cl(V,Q)$  acting on the exterior algebra $\bigwedge(V).$ This brings us the idea of representing physical fields, including those corresponding to spin $1/2,$ as multivectors, that is, the elements of the exterior algebra, as it was extensively discussed in Ref. \cite{graf1978};
while it is instructive to see how the case of characteristic 2 is being dealt with in Ref. \cite{bourbaki1959}, our main interest will be the case of $\BK$ being the field of real numbers $\BR$ or complex numbers $\BC.$

In linear algebra, one shows that every vector space has a basis, possibly infinite, of linearly independent vectors (Hamel basis). Moreover, every two bases have the same cardinal number called the dimension of the vector space - cf. e.g. \cite[p. 103]{chevalley1956}. Every system of linearly independent vectors can be extended to a basis.
 \label{sec:scl}

\subsection{Tensor algebra of $V$}
Let $V$ be a vector space over $\BK.$ An algebra $\T$ is called a tensor algebra over $V$ (or ``of $V$") if it satisfies the following {\em universal property}\footnote{In all standard textbooks, see e.g. \cite{Bourbaki1998,chevalley1996,Northcott2009,greub1978}, the above characterisation of the tensor algebra of a module is always completed by an explicit construction. We also notice that the property of $\T(V)$ to be `generated by $V$' in (i) is superfluous (and even `forbidden'), as it is a consequence of the `uniqueness' that is the part of the universal property (ii).}
\begin{enumerate}
\item[{\rm (i)}] $\T$ is an associative algebra with unit containing $V$ as a linear subspace, and is generated by $V,$
\item[{\rm (ii)}] Every linear mapping $\phi$ of $V$ into an associative algebra $A$ with unit over $\BK$, can be uniquely extended to an algebra homomorphism $\theta$ of $\T$ into $A$:
\end{enumerate}
Denoting by $\iota$ the embedding of $V$ into $\T$ mentioned in (i) the universal property expressed in (ii) reads:
\be \phi=\theta\circ\iota.\nn\ee  Let $\T(V)$ be the tensor algebra of $V$. The multiplication in $\T(V)$ is denoted by the symbol $\otimes$. If $\{e_i\}_{i\in I(V)}$ is a basis in $V$, then $\mathbf{1}\in \T^0(V)\subset \T(V)$ together with $e_{i_1}\otimes\cdots\otimes e_{i_p}$, $(p=1,2,\ldots )$ form a basis in $\T(V)$.
The tensor algebra $\T(V)$ of $V$ is $\BZ$-graded. We have
 \be \T(V)=\bigoplus\displaylimits_{p=0}^{\infty}\T^p(V),\nn\ee
 where
 \be \T^p(V)=V^{\otimes p}=\underbrace{V\otimes \cdots \otimes V}_\text{$p$ \textup{factors}}\nn\ee
 is the subspace spanned by $e_{i_1}\otimes\cdots\otimes e_{i_p}$.
It is understood here that $\T^0(V)=\BK$ and $\T^1(V)=V.$ The fact that $\T(V)$ is a graded algebra means that for any $x\in \T^p(V),y\in \T^q(V)$ the product $xy$ is in $\T^{p+q}(V)$ for all $p,q=0,1,\dotsi $.

By using the universal property, one defines the main involution $\alpha$ and the main anti-involution $\tau$ of $\T(V).$\footnote{The mapping $\alpha$ is an involutive algebra homomorphism, while $\tau$ is an involutive algebra anti-homomorphism.} Explicitly, on homogeneous elements, we have:
\be \alpha (x_1\otimes\cdots\otimes x_p)=(-1)^p x_1\otimes \cdots \otimes x_p,\label{eq:alpha}\ee
\be \tau(x_1 \otimes\cdots\otimes x_p)=x_p\otimes \cdots\otimes x_1,\label{eq:tau}\ee
for $x_1,\ldots,x_p\in V.$

The algebra $\T(V)$ is $\BZ_2$-graded into even and odd parts.   The main involution $\alpha$ respects this $\BZ_2$-gradation, For even $u\in \T(V)$ we have $\alpha(u)=u$, and for $u$ odd we have  $\alpha(u)=-u.$
\subsubsection{Operators $e_x$, $i_f$, $i_x^F$.}\label{sec:ops}
For any vector space $W$ we denote by $\End(W)$ the algebra of all endomorphisms (linear maps) of $W.$  For $x\in V$, we denote by $e_x$ the linear operator $e_x\in \End(\T(V)),$ $\T^p(V)\rightarrow \T^{p+1}(V),$ of left multiplication by $x$:
\be e_x:u\mapsto e_x(u)=x\otimes u,\, u\in \T(V).\nn\ee
We denote by $V^*$ the dual vector space, that is the space of all linear functions from $V$ to $\BK.$ The following Proposition associates to each $f\in V^*$ an antiderivation of the graded algebra $\T(V)$.\footnote{C.f. also \cite[Lemma 3.2, p. 43]{chevalley1996},\cite[Lemma 1, p. 141]{Bourbaki2006}. Usually $i_f$ is defined on the exterior algebra rather than on the tensor algebra. Notice however that $i_f$ defined here is not the same as $i_f$ defined in Ref. \cite[A III.161-165]{Bourbaki1998} and\cite[A.15.2.1, p. 367]{dieu3}, where $i_f$ on $\T(V)$ is defined as the transpose of the operator  $e_f$ acting on $\T(V)^*$, and where $i_f(u)$ is also written as $f\rfloor\, u$ and called a {\it a contraction in the direction of $f$. }}
\begin{proposition}\label{prop:if}
Let $f$ be an element of $V^*.$ There exists a unique linear mapping $i_f$ from $\T(V)$ to $\T(V)$ such that
\begin{enumerate}
\item We have \be i_f(\mathbf{1})=0,\label{eq:if1}\ee
\item For all $x\in V$ we have
\be e_x\circ i_f+i_f\circ e_x=f(x)\mathbf{1}\label{eq:if2a}\ee
\end{enumerate}
The map $f\mapsto i_f$ from $V^*$ to linear transformations on $\T(V)$ is linear. We have
\begin{enumerate}
\item[{\rm (i)}] For all $x\in V\subset \T(V)$, $u\in \T(V),$ \be i_f(x\otimes u)=f(x)u-x\otimes i_f(u),\label{eq:ift}\ee
\item[{\rm (ii)}] $i_f(\T^p(V))\subset \T^{p-1}(V),$
\item[{\rm (iii)}] $i_f^2=0$,
\item[{\rm (iv)}] $i_fi_g+i_gi_f=0,\,\mbox{for all }f,g\in V^*$.
\end{enumerate}
For $x_1,\ldots,x_p\in V$ we have
\be i_f(x_1\otimes\cdots\otimes x_p)=\sum_{i=1}^p (-1)^{i-1}f(x_i)\,x_1\otimes\cdots\otimes \hat{x}_i\otimes\cdots\otimes x_p.\label{eq:ifx1p}\ee
\label{prop:if0}\end{proposition}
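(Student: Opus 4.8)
The plan is to exploit the $\BZ$-gradation of $\T(V)$ together with the canonical identification $\T^p(V)=V\otimes\T^{p-1}(V)$, reducing everything to a degree-by-degree recursion, and then to deduce the structural properties (i)--(iv) and the explicit formula as formal consequences of the two defining relations and uniqueness. I would first dispose of uniqueness. Suppose $i_f$ satisfies $i_f(1)=0$ and the anticommutation relation (\ref{eq:if2a}). Evaluating (\ref{eq:if2a}) on an element $u\in\T^{p-1}(V)$ gives $x\otimes i_f(u)+i_f(x\otimes u)=f(x)u$, which is exactly the recursion $i_f(x\otimes u)=f(x)u-x\otimes i_f(u)$ of (\ref{eq:ift}). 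Since every element of $\T^p(V)$ is a finite sum of decomposable tensors $x\otimes u$ with $x\in V$ and $u\in\T^{p-1}(V)$, an induction on $p$ (with base case $i_f(1)=0$ on $\T^0(V)=\BK$) shows that the values of $i_f$ on $\T^p(V)$ are completely fixed by its values on $\T^{p-1}(V)$. Hence at most one such $i_f$ exists.

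For existence I would run the same recursion forwards to \emph{build} the operator, and this is where the only genuine care is required. Set $i_f=0$ on $\T^0(V)$, and assuming $i_f\colon\T^{p-1}(V)\to\T^{p-2}(V)$ has been constructed, consider the assignment $(x,u)\mapsto f(x)u-x\otimes i_f(u)$ on $V\times\T^{p-1}(V)$. The subtle point is \textbf{well-definedness}: because decomposable tensors do not represent elements of $\T^p(V)$ uniquely, I must check that this prescription descends to a map on $\T^p(V)$ rather than merely on symbols $x\otimes u$. This is supplied by the universal property of the tensor product: the displayed assignment is bilinear in $(x,u)$, so it factors uniquely through $V\otimes\T^{p-1}(V)=\T^p(V)$, yielding a linear map $i_f\colon\T^p(V)\to\T^{p-1}(V)$. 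Assembling these pieces over the direct sum $\T(V)=\bigoplus_p\T^p(V)$ produces a linear endomorphism of $\T(V)$ that satisfies (\ref{eq:if1}) and, by construction, the recursion (\ref{eq:ift}); and since (\ref{eq:ift}) is precisely (\ref{eq:if2a}) rewritten, both defining relations hold. Property (ii) is immediate from this graded construction, as is the base computation $i_f(x_1)=i_f(x_1\otimes 1)=f(x_1)\cdot 1$.

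The remaining claims then follow cheaply, each leaning on uniqueness. Linearity of $f\mapsto i_f$ drops out at once: both $i_{\lambda f+\mu g}$ and $\lambda i_f+\mu i_g$ annihilate $1$ and have anticommutator with $e_x$ equal to $(\lambda f+\mu g)(x)\cdot 1$, so by uniqueness they coincide. Property (i) is the relation (\ref{eq:ift}) already established, and the explicit formula (\ref{eq:ifx1p}) follows by induction on $p$ from (\ref{eq:ift}), the signs $(-1)^{i-1}$ appearing through the antiderivation rule at each step. For (iii) I would induct using (\ref{eq:ift}): one computes $i_f^2(x\otimes u)=i_f\bigl(f(x)u-x\otimes i_f(u)\bigr)=f(x)i_f(u)-\bigl(f(x)i_f(u)-x\otimes i_f^2(u)\bigr)=x\otimes i_f^2(u)$, which vanishes by the inductive hypothesis. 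Finally (iv) is a one-line polarization of (iii): applying $i_h^2=0$ to $h=f+g$ and expanding via the already-proved linearity gives $0=i_{f+g}^2=i_f^2+i_fi_g+i_gi_f+i_g^2=i_fi_g+i_gi_f$. As indicated, I expect the well-definedness verification in the existence step to be the only real obstacle; everything past it is formal manipulation governed by the uniqueness statement.
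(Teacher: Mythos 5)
Your proof is correct, and it diverges from the paper's argument in exactly one substantive place: the existence step. The paper defines $i_f$ on each $\T^p(V)$ in one shot, by observing that the explicit expression $(x_1,\ldots,x_p)\mapsto\sum_{i=1}^p(-1)^{i-1}f(x_i)\,x_1\otimes\cdots\otimes\hat{x}_i\otimes\cdots\otimes x_p$ is $p$-linear and hence, by the universal property of tensor products, factors through a linear map $i_f^{(p)}:\T^p(V)\to\T^{p-1}(V)$; the closed formula then holds by definition, and the recursion $i_f(x\otimes u)=f(x)u-x\otimes i_f(u)$ is \emph{derived} from it by linearity. You run the construction in the opposite direction: you build $i_f$ degree by degree from the recursion itself, checking that $(x,u)\mapsto f(x)u-x\otimes i_f(u)$ is bilinear so that it factors through $V\otimes\T^{p-1}(V)=\T^p(V)$, and only afterwards recover the closed formula by induction on $p$. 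Both routes are sound and both rest on the universal property; yours trades a single $p$-multilinearity check for a chain of bilinearity checks plus the canonical identification $\T^p(V)\cong V\otimes\T^{p-1}(V)$, and it correctly isolates well-definedness on decomposable tensors as the one genuine danger point --- a point the paper sidesteps because its defining formula is manifestly a map on $p$-tuples rather than on symbols $x\otimes u$. Your treatment of the linearity of $f\mapsto i_f$ (by showing both candidate operators satisfy the two defining relations and invoking uniqueness) is also more explicit than the paper's, which leaves this claim implicit, it being immediate from the closed formula, visibly linear in $f$. The remaining items coincide with the paper: uniqueness by the same graded recursion anchored at $i_f(1)=0$, (iii) by the identical computation $i_f^2(x\otimes u)=x\otimes i_f^2(u)$ followed by induction, and (iv) by the same polarization $i_{f+g}^2=0$.
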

\begin{proof}
The formula (\ref{eq:ift}) in (i) is an explicit expression of Eq. (\ref{eq:if2a}). The uniqueness of $i_f$ follows from the fact that Eq. (\ref{eq:if1}) determines $i_f$ on $\T^0(V)$, and
Eq. (\ref{eq:ift}) determines $i_f$ on $\T^p(V)$ once it is given on $\T^{p-1}(V)$. To prove the existence we notice that for $p\geq 1$ the map
\be (x_1,\ldots,x_p)\mapsto \sum_{i=1}^p (-1)^{i-1}f(x_i)\,x_1\otimes\cdots\otimes \hat{x}_i\otimes\cdots\otimes x_p\nn\ee
from $V\times\cdots\times V$ to $\T(V)$ is $p$-linear, and therefore, by the universal property of tensor products, extends to a unique linear map, denoted here by $i^{(p)}_f,$ from $\T^p(V)$ to $\T^{p-1}(V).$ The mappings $i^{(p)}_f$, together with Eq.(\ref{eq:if1}) determine $i_f$ on $\T(V)$, so that Eq. (\ref{eq:ifx1p}) holds, and then Eq. (\ref{eq:ift}) follows by  linearity. To prove (iii) we apply $i_f$ to both sides of Eq. (\ref{eq:ift})  to obtain
\begin{eqnarray*} i_f^2(x\otimes u)&=&f(x)i_f(u)-f(x)i_f(u)+x\otimes i_f^2(u)\\
&=&x\otimes i_f^2(u).\end{eqnarray*} Since $i_f(\mathbf{1})=0$, thus $i_f^2(\mathbf{1})=0,$ by recurrence we get that $i_f^2(u)=0$ for all $u\in \T(V).$ Then (iv) follows by noticing that $i_{f+g}^2=0$ for all $f,g\in V^*.$
\end{proof}
\begin{definition}[$i_x^F$]\label{def:ixf}
Let $F$ be a bilinear form on $V.$ Then every $x\in V$ determines a linear form $f_x$ on $V$ defined as $f_x(y)=F(x,y).$ {\em We will denote by $i_x^F$ the antiderivation $i_{f_x}$  described in Proposition \ref{prop:if0}}. In particular we have:
\begin{enumerate}
\item[{\rm (i)}] $i_x^F(\mathbf{1})=0$,
\item[{\rm (ii)}] For all $y\in V$, $w\in \T(V)$ we have \be i_x^F(y\otimes w)=F(x,y)w-y\otimes i_x^F(w),\label{eq:iFx0}\ee
\end{enumerate}
\end{definition}
\begin{proposition}\label{prop:ixf1}
With the same  notation as in the Definition \ref{def:ixf}, for $y_1,\ldots,y_n$ in $\T(V)$ we have
\be i_x^F(y_1\otimes\cdots\otimes y_n)=\sum_{j=1}^n (-1)^{j-1}F(x,y_j)y_1\otimes\cdots\otimes\hat{y}_j\otimes\cdots\otimes y_n,\nn\ee
where $\hat{y}_j$ means that this factor is omitted in the product.
\end{proposition}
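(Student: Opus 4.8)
The plan is to observe that this Proposition is nothing more than the specialization of formula (\ref{eq:ifx1p}) to the linear form attached to $F$ by a fixed first slot. By Definition \ref{def:ixf}, the antiderivation $i_x^F$ is \emph{by definition} the operator $i_{f_x}$, where $f_x\in V^*$ is the linear form $f_x(y)=F(x,y)$. Since $f_x$ is a genuine element of $V^*$, Proposition \ref{prop:if0} applies to it verbatim, and in particular its explicit evaluation formula (\ref{eq:ifx1p}) holds with $f$ replaced by $f_x$.

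First I would invoke (\ref{eq:ifx1p}) for $f=f_x$ on a decomposable tensor of vectors $y_1,\ldots,y_n\in V\subset\T(V)$, obtaining
\be i_x^F(y_1\otimes\cdots\otimes y_n)=\sum_{j=1}^n (-1)^{j-1}f_x(y_j)\,y_1\otimes\cdots\otimes\hat{y}_j\otimes\cdots\otimes y_n.\nn\ee
Then I would substitute the definition $f_x(y_j)=F(x,y_j)$ into each term of the sum, which immediately yields the asserted identity. No induction is needed here, since the work of proving existence, uniqueness, and the closed-form expansion was already carried out in Proposition \ref{prop:if0}; the present statement merely reads off the result under the identification $i_x^F=i_{f_x}$.

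Since the claim reduces to a direct substitution, there is essentially no obstacle to overcome. The only point warranting a word of care is interpretive rather than computational: although the statement writes $y_1,\ldots,y_n\in\T(V)$, the right-hand side involves the scalars $F(x,y_j)$, and $F(x,\cdot)$ is defined only on $V$; the formula is therefore to be read for vectors $y_j\in V$, the decomposable case to which (\ref{eq:ifx1p}) directly applies, with the general case on $\T^n(V)$ following by $n$-linearity exactly as in the existence argument for Proposition \ref{prop:if0}.
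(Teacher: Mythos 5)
Your proposal is correct and matches the paper's proof, which simply states that the result ``follows immediately from the definition'' --- i.e., precisely your observation that $i_x^F = i_{f_x}$ and that formula (\ref{eq:ifx1p}) applied to $f_x$ gives the claim upon substituting $f_x(y_j)=F(x,y_j)$. Your closing remark that the statement's ``$y_1,\ldots,y_n$ in $\T(V)$'' should be read as $y_j\in V$ is a correct and worthwhile clarification that the paper leaves implicit.
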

\begin{proof}
The proof follows immediately from the definition.
\end{proof}
\subsection{Bourbaki's map $\lambda_F$}Bourbaki's book on sesquilinear and quadratic forms \cite{bourbaki1959} is one of the very few in the N. Bourbaki ``\'{E}l\'{e}ments de math\'{e}matique'' series that has never been translated into English.\footnote{It has been translated into Russian though \cite{bourbaki_ru}.} That is probably one of the main reasons why the idea of constructing a family of maps $\lambda_F$, parametrized by bilinear forms $F$ and acting as bundle automorphisms in the $\BZ_2$ graded vector bundle of Clifford algebras over the space of bilinear forms,  did not receive the deserved attention in the Clifford algebra community.\footnote{Ref. \cite[pp. 30-31]{crumeyrolle} seems to be an exception. The author there considers Bourbaki's mapping $\lambda_F$, though without quoting Bourbaki, for unexplained reason he calls it `antiderivation', and fails to distinguish between its actions on the tensor, exterior and Clifford algebras.} Here we will introduce Bourbaki's operators $\Lambda_F$ by modifying Chevalley's method as described in \cite[Ch. 2.1]{chevalley1996}, \cite[Ch. 5.14]{chevalley1956}, \cite[Ch. 5.7]{Northcott2009}. \footnote{Here we use the tensor algebra, instead of the exterior algebra as it is done in the quoted references.}\textsuperscript{,} \footnote{In Ref. \cite{Helmstetter2008} the authors discuss deformations of Clifford algebras, but they do it in their own way, difficult to follow for the present author.}
\begin{definition}
Let $F$ be a bilinear form on $V$. For each $x\in V$ let $\Lambda_F(x)\in \End(\T(V))$ be defined as
\be \Lambda_F(x)=e_x+i_x^F.\nn\ee
The map $x\mapsto \Lambda_F(x)$ is linear, therefore, by the universal property of $\T(V)$, it extends to a unique algebra homomorphism, denoted $\Lambda_F$, from $\T(V)$ to $\End(\T(V))$
\be \Lambda_F:\T(V)\ni u \mapsto \Lambda_F(u)\in \End(\T(V)).\nn\ee
\end{definition}
In particular,
\be \Lambda_F(\mathbf{1})=\Id_{\T(V)}.\label{eq:L1}\ee
For $u,v\in \T(V)$, we have
 \be \Lambda_F(u\otimes v)=\Lambda_F(u)\,\Lambda_F(v),\label{eq:L1a}\ee
and, in particular,  for $x\in V$, $u\in \T(V)$  we have
 \be \Lambda_F(x\otimes u)=(e_x+i_x^F)\Lambda_F(u).\label{eq:Lfxu}\ee
Finally, for $x_1,\ldots,x_p\in V$, we have
\be \Lambda_F(x_1\otimes\cdots\otimes x_p)=(e_{x_1}+i_{x_1}^F)\cdots (e_{x_p}+i_{x_p}^F).\label{eq:L1p}\ee
\begin{proposition}
For $u\in \T^p(V)$ and $f\in V^*$ we have
\be i_f\circ \Lambda_F(u)=\Lambda_F(i_f(u))+(-1)^p\Lambda_F(u)\circ i_f.\label{eq:iFu}\ee
\end{proposition}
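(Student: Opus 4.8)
The plan is to argue by induction on the tensor degree $p$, exploiting that $\Lambda_F$ is an algebra homomorphism together with two clean commutation relations that describe how $i_f$ passes through the two pieces $e_x$ and $i_x^F$ of the generator $\Lambda_F(x)=e_x+i_x^F$. First I would isolate those two relations. From Eq. (\ref{eq:if2a}) one reads off $i_f\circ e_x = f(x)\,\mbox{Id} - e_x\circ i_f$, the antiderivation property of $i_f$ with respect to left multiplication. Since $i_x^F$ is by Definition \ref{def:ixf} just $i_{f_x}$ for the linear form $f_x(y)=F(x,y)$, property (iv) of Proposition \ref{prop:if0} supplies the second relation $i_f\circ i_x^F = - i_x^F\circ i_f$. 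In words, $i_f$ anticommutes with $i_x^F$ exactly, and anticommutes with $e_x$ up to the scalar correction $f(x)$.

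For the base case $p=0$ the element $u$ is a scalar, so $\Lambda_F(u)=u\,\mbox{Id}$ and $i_f(u)=0$, and both sides collapse to $u\,i_f$. For the inductive step I would assume the identity on all of $\T^p(V)$ and verify it on a spanning element $x\otimes u$ of $\T^{p+1}(V)$, with $x\in V$ and $u\in\T^p(V)$; this is enough because both sides are linear in the degree-$(p+1)$ argument. Using $\Lambda_F(x\otimes u)=(e_x+i_x^F)\Lambda_F(u)$ from Eq. (\ref{eq:Lfxu}) and the two relations above, the left-hand side rearranges as
\[
i_f\circ(e_x+i_x^F)\circ\Lambda_F(u) = f(x)\,\Lambda_F(u) - (e_x+i_x^F)\circ\bigl(i_f\circ\Lambda_F(u)\bigr).
\]
I then substitute the induction hypothesis for $i_f\circ\Lambda_F(u)$ and recognize $(e_x+i_x^F)\Lambda_F(w)=\Lambda_F(x\otimes w)$ for $w=u$ and $w=i_f(u)$, so the right factor becomes $\Lambda_F(x\otimes i_f(u))$ and $\Lambda_F(x\otimes u)$ respectively.

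To close the argument I compute the target right-hand side at degree $p+1$: Eq. (\ref{eq:ift}) gives $i_f(x\otimes u)=f(x)u - x\otimes i_f(u)$, hence by linearity of $\Lambda_F$ we get $\Lambda_F(i_f(x\otimes u)) = f(x)\Lambda_F(u) - \Lambda_F(x\otimes i_f(u))$. Comparing with the rearranged left-hand side, and using $(-1)^{p+1}=-(-1)^p$, the two expressions coincide, which is the claim at degree $p+1$. The only real point to watch is the sign bookkeeping: one must use that $i_f$ genuinely anticommutes with $i_x^F$ (not commutes) while anticommuting with $e_x$ only modulo the scalar $f(x)$. The degree sign $(-1)^p$ in the hypothesis advances to $(-1)^{p+1}$ precisely because moving $i_f$ past the single operator $(e_x+i_x^F)$ contributes one extra sign, and I expect no obstacle beyond keeping that tracking straight.
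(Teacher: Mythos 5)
Your proof is correct and follows essentially the same route as the paper's: induction on the degree $p$, using the relations $i_f\circ e_x=f(x)\,\mathrm{Id}-e_x\circ i_f$ and $i_f\circ i_x^F=-i_x^F\circ i_f$ to move $i_f$ past the factor $e_x+i_x^F$ in $\Lambda_F(x\otimes u)=(e_x+i_x^F)\Lambda_F(u)$, then matching the result against $\Lambda_F(i_f(x\otimes u))=f(x)\Lambda_F(u)-\Lambda_F(x\otimes i_f(u))$. The only cosmetic difference is that you state the base case for general scalars $u\in\T^0(V)$ rather than just $u=1$, and you make explicit the anticommutation of $i_f$ with $i_x^F$ that the paper uses silently; the sign bookkeeping you describe is exactly the paper's.
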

\begin{proof}
We prove by induction. Since $\Lambda_F(\mathbf{1})=\Id$, the statement evidently holds for $u=\mathbf{1}.$ Let us suppose that the statement holds for all $u\in \T^p(V)$, $p\geq 0$. Then with $u\in \T^p(V)$, $x\in V$ we have
\begin{equation*} \begin{split}
i_f\circ\Lambda_F(x\otimes u)&=i_f\circ(e_x+i_x^F)\,\Lambda_F(u)=(i_f e_x +i_fi_x^F)\Lambda_F(u)\\
&=(f(x)-e_xi_f-i_x^f i_f)\Lambda_F(u)\\
&=f(x)\Lambda_F(u)-(e_x+i_x^F)i_f\circ \Lambda_F(u)\\
&=f(x)\Lambda_F(u)-(e_x+i_x^F)\big(\Lambda_F(i_f(u))+(-1)^p\Lambda_F(u)\circ i_f\big)\\
&=\Lambda_F(f(x)u)-\Lambda_F(x\otimes i_f(u))+(-1)^{p+1}\Lambda_F(x\otimes u)\circ i_f\\
&=\Lambda_F(i_f(x\otimes u))+(-1)^{p+1}\Lambda_F(x\otimes u)\circ i_f.
\end{split}\end{equation*}
Therefore the statement holds also on $\T^{p+1}(V).$
\end{proof}
\begin{lemma}
Let $F,G$ be two bilinear forms on $V$. For $u,v\in \T(V)$ we have
\be \Lambda_F(\Lambda_G(u)(v))(w)=\Lambda_{F+G}(u)(\Lambda_F(v)(w))\label{eq:LFG}\ee
for all $w\in \T(\Rad(G))\subseteq \T(V)$, where $\Rad(G)=\{w\in V: G(v,w)=0,\,\forall v\in V\}.$
\end{lemma}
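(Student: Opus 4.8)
The plan is to prove the identity by induction on the tensor degree of $u$, holding $v\in\T(V)$ and $w\in\T(\mbox{Rad}_R(G))$ fixed. I would write $\Phi(u)=\Lambda_F(\Lambda_G(u)(v))(w)$ and $\Psi(u)=\Lambda_{F+G}(u)(\Lambda_F(v)(w))$; both are linear in $u$, so it suffices to check them on $1\in\T^0(V)$ and to propagate from $\T^p(V)$ to $\T^{p+1}(V)$ along left multiplication by vectors $x\in V$. The base case is immediate: $\Lambda_G(1)=\mbox{Id}$ and $\Lambda_{F+G}(1)=\mbox{Id}$ by Eq. (\ref{eq:L1}), whence $\Phi(1)=\Lambda_F(v)(w)=\Psi(1)$.

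For the inductive step I would set $z=\Lambda_G(u)(v)$ and use the homomorphism property $\Lambda_G(x\otimes u)=\Lambda_G(x)\Lambda_G(u)$ together with $\Lambda_G(x)=e_x+i_x^G$ to obtain $\Lambda_G(x\otimes u)(v)=x\otimes z+i_x^G(z)$. Applying $\Lambda_F(\cdot)(w)$, exploiting linearity of $\Lambda_F$, and using Eq. (\ref{eq:Lfxu}) on the first summand gives
\be \Phi(x\otimes u)=(e_x+i_x^F)\bigl(\Lambda_F(z)(w)\bigr)+\Lambda_F\bigl(i_x^G(z)\bigr)(w)=(e_x+i_x^F)\Phi(u)+\Lambda_F\bigl(i_x^G(z)\bigr)(w),\nn\ee
since $\Lambda_F(z)(w)=\Phi(u)$ by definition of $z$. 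On the other side, using $\Lambda_{F+G}(x\otimes u)=(e_x+i_x^{F+G})\Lambda_{F+G}(u)$ and the linearity relation $i_x^{F+G}=i_x^F+i_x^G$ (a consequence of the linearity of $f\mapsto i_f$ in Proposition \ref{prop:if0} and the fact that the linear form $y\mapsto(F+G)(x,y)$ splits as a sum), I get $\Psi(x\otimes u)=(e_x+i_x^F+i_x^G)\Psi(u)$. By the induction hypothesis $\Phi(u)=\Psi(u)$, so the two sides coincide precisely when
\be \Lambda_F\bigl(i_x^G(z)\bigr)(w)=i_x^G\bigl(\Lambda_F(z)(w)\bigr).\nn\ee

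The crux is to establish this last equality, and this is exactly where the restriction $w\in\T(\mbox{Rad}_R(G))$ enters. I would apply the commutation relation (\ref{eq:iFu}) with $f=f_x^G$, the linear form $y\mapsto G(x,y)$, to each homogeneous component $z_p\in\T^p(V)$ of $z=\Lambda_G(u)(v)$, which yields $i_x^G(\Lambda_F(z_p)(w))=\Lambda_F(i_x^G(z_p))(w)+(-1)^p\Lambda_F(z_p)(i_x^G(w))$. The correction term carries the factor $i_x^G(w)$, and for $w=w_1\otimes\cdots\otimes w_n$ with each $w_j\in\mbox{Rad}_R(G)$ the explicit formula of Proposition \ref{prop:ixf1} shows $i_x^G(w)=\sum_j(-1)^{j-1}G(x,w_j)\,w_1\otimes\cdots\otimes\hat{w}_j\otimes\cdots\otimes w_n=0$, because $G(x,w_j)=0$ by the definition of the right radical. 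Hence $i_x^G$ annihilates all of $\T(\mbox{Rad}_R(G))$, every correction term drops out, and summing over $p$ delivers the required equality for the (generally inhomogeneous) element $z$.

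The main obstacle I anticipate is bookkeeping rather than conceptual: $z=\Lambda_G(u)(v)$ need not be homogeneous, so the commutation relation (\ref{eq:iFu}) — stated for homogeneous arguments — must be applied componentwise and then resummed, and one must check that the vanishing of $i_x^G(w)$ kills the $(-1)^p$-weighted defect term at every degree $p$ simultaneously, which it does independently of the sign. Once this is observed, the argument is a clean induction: the homomorphism property reduces everything to degree one, and the right-radical hypothesis is precisely the condition that upgrades the intertwining \emph{defect} between $i_x^G$ and $\Lambda_F$ into an exact identity.
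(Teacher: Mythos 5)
Your proof is correct and takes essentially the same route as the paper's: induction on the tensor degree of $u$, the base case via $\Lambda(1)=\mbox{Id}$, the inductive step via the homomorphism property and $i_x^{F+G}=i_x^F+i_x^G$, and the commutation relation (\ref{eq:iFu}) producing a defect term that dies because $i_x^G$ annihilates $\T(\mbox{Rad}_R(G))$. If anything, you are slightly more careful than the paper, which writes a single sign $(-1)^q$ when applying (\ref{eq:iFu}) even though $z=\Lambda_G(u)(v)$ is inhomogeneous (its components share parity $(-1)^{p+q}$, so your componentwise reading is the strictly accurate one — harmless either way, as you note, since the signed term vanishes).
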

\begin{proof}
The proof is by induction. From Eq.(\ref{eq:L1}) for $u=\mathbf{1}$ we get, on the left hand side
$\Lambda_F(\Lambda_G(\mathbf{1})(v))(w)=\Lambda_F(v)(w),$ and the same result on the right hand side $\Lambda_{F+G}(\mathbf{1})(\Lambda_F((v)(w))=\Lambda_F(v)(w),$ for all $w\in V.$
Let us assume that the statement holds for $u\in \T^p(V).$
We will show that then it also holds for $u\in \T^{p+1}(V).$ Indeed, for $u\in \T^p(V),$ $v\in \T^q(V),$ and $x\in V$
we have:
\be
\begin{split}
&\Lambda_F(\Lambda_G(x\otimes u)(v))(w)=\Lambda_F((e_x+i_x^G)\,\Lambda_G(u)(v))(w)\nn\\
&=\Lambda_F(e_x\,\Lambda_G(u)(v))(w)+\Lambda_F(i_x^G\Lambda_G(u)(v))(w)\nn\\
&=(e_x+i_x^F)\,\Lambda_F\big(\Lambda_G(u)(v)\big)(w)\\
&+i_x^G\,\Lambda_F\big(\Lambda_G(u)(v)\big)(w)-(-1)^q\Lambda_F\big(\Lambda_G(u)(v)\big)(i_x^Gw)\nn\\
&=(e_x+i_x^{F+G})\,\Lambda_F\big(\Lambda_G(u)(v)\big)(w)-(-1)^q\Lambda_F(\Lambda_G(u)(v))(i_x^Gw).\nn
\end{split}\ee
By the induction hypothesis for $w\in \T(\Rad(G))$ we have $\Lambda_F\big(\Lambda_G(u)(v)\big)(w)=\Lambda_{F+G}(u)\big(\Lambda_F(v)(w)\big)$, therefore
\be \begin{split}
&\Lambda_F(\Lambda_G(x\otimes u)(v))(w)\\
&=(e_x+i_x^{F+G})\,\Lambda_{F+G}(x\otimes u)(v))(w)-(-1)^q\Lambda_F\big(\Lambda_G(u)(v)\big)(i_x^Gw)\\
&=\Lambda_{F+G}(x\otimes u)(\Lambda_F(v)(w))-(-1)^q\Lambda_F\big(\Lambda_G(u)(v)\big)(i_x^Gw).
\nn\end{split}\ee
Now, if $w\in \T(\Rad(G))$, the last term vanishes, which proves the lemma.
\end{proof}
We notice that if $G$ is nondegenerate, then $\Rad(G)=\{0\}$, and $$\T(\Rad(G))=\T^0(V)=\BK \mathbf{1}.$$
\begin{definition}
Let $F$ be a bilinear form on $V$. We define $\lambda_F\in \End(\T(V))$ as
\be \lambda_F(u)=\Lambda_F(u)(\mathbf{1}), \label{eq:lf1}\ee
where $\mathbf{1}$ denotes $\mathbf{1}\in \BK=\T^0(V).$
\end{definition}
\begin{proposition}\label{prop:7}
The map $\lambda_F$ defined above has the following properties
\begin{enumerate}
\item[{\rm (i)}]
\be \lambda_F(\mathbf{1})=\mathbf{1},\quad \lambda_F(x)=x,\,(x\in V),\label{eq:l1}\ee
\item[{\rm (ii)}] \be\lambda_F(x\otimes u)=i^F_x(\lambda_F(u))+x\otimes\lambda_F(u),\, (x\in V)\label{eq:lb},\ee
\item[{\rm (iii)}] For all $f\in V^*$ we have
\be \lambda_F\circ i_f=i_f\circ\lambda_F.\label{eq:iii}\ee
\item[{\rm (iv)}] If $F$ and $G$ are two bilinear forms on $V,$ then \be\lambda_F\circ\lambda_G=\lambda_{F+G}.\label{eq:lbc0}\ee
\item[{\rm (v)}] For $F=0$, $\lambda_F$ is the identity mapping:
\be \lambda_0 =\Id_{\T(V)}.\nn\ee
\item[{\rm (vi)}] For every bilinear form $F,$ the linear mapping $\lambda_F:\T(M)\rightarrow \T(M)$ is a bijection.
\item[{\rm (vii)}] The map $\lambda_F$ preserves the parity (even-odd), i.e. with $\alpha$ defined as in Eq. (\ref{eq:alpha}) for $x_1,\ldots ,x_p\in V$ we have
\be \alpha (\lambda_F(x_1\otimes\cdots\otimes x_p))=(-1)^p\,\lambda_F(x_1\otimes\cdots\otimes x_p).\nn\ee
\end{enumerate}
\end{proposition}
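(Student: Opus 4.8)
The plan is to take the seven items in order, using each as it becomes available, so that only (iii) and (iv) require genuine work. I would first read off (i), (ii) and (v) straight from the definitions. For (i), $\lambda_F(1)=\Lambda_F(1)(1)=\mbox{Id}(1)=1$ by Eq.~(\ref{eq:L1}), while $\lambda_F(x)=\Lambda_F(x)(1)=(e_x+i_x^F)(1)=x\otimes 1+0=x$, using $i_x^F(1)=0$ from Definition~\ref{def:ixf}. For (ii), applying Eq.~(\ref{eq:Lfxu}) at $1$ gives $\lambda_F(x\otimes u)=(e_x+i_x^F)\Lambda_F(u)(1)=x\otimes\lambda_F(u)+i_x^F(\lambda_F(u))$. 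For (v), when $F=0$ the form $f_x(y)=F(x,y)$ vanishes, so $i_x^0=i_0=0$ by linearity of $f\mapsto i_f$; then (ii) collapses to $\lambda_0(x\otimes u)=x\otimes\lambda_0(u)$, and together with $\lambda_0(1)=1$ an induction on tensor degree yields $\lambda_0=\mbox{Id}_{\T(V)}$.

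For (iii) I would evaluate the already-established identity (\ref{eq:iFu}) at the element $1$. For homogeneous $u\in\T^p(V)$ this reads $i_f(\Lambda_F(u)(1))=\Lambda_F(i_f(u))(1)+(-1)^p\Lambda_F(u)(i_f(1))$; since $i_f(1)=0$ the last term drops, leaving $i_f(\lambda_F(u))=\lambda_F(i_f(u))$. Linearity in $u$ then extends $i_f\circ\lambda_F=\lambda_F\circ i_f$ to all of $\T(V)$.

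The crux is (iv), and here I would specialize the Lemma~(\ref{eq:LFG}) to $v=w=1$. The left side becomes $\Lambda_F(\Lambda_G(u)(1))(1)=\lambda_F(\lambda_G(u))$ and the right side becomes $\Lambda_{F+G}(u)(\Lambda_F(1)(1))=\Lambda_{F+G}(u)(1)=\lambda_{F+G}(u)$, using $\Lambda_F(1)=\mbox{Id}$. The one point demanding care, and the step I expect to be the main obstacle conceptually, is that (\ref{eq:LFG}) holds only for $w\in\T(\mbox{Rad}_R(G))$; this is resolved by the observation that $1\in\T^0(V)=\BK$ lies in $\T(\mbox{Rad}_R(G))$ for every $G$, so the restriction is vacuous at the value we need. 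This gives $\lambda_F\circ\lambda_G=\lambda_{F+G}$, whence (vi) is immediate: combining (iv) with (v) yields $\lambda_F\circ\lambda_{-F}=\lambda_{-F}\circ\lambda_F=\lambda_0=\mbox{Id}$, so $\lambda_{-F}$ is a two-sided inverse and $\lambda_F$ is a bijection.

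Finally, for (vii) I would exploit that $\alpha$ anticommutes with each generator of $\Lambda_F$: since $e_x$ raises and $i_x^F$ lowers tensor degree by one, Eq.~(\ref{eq:alpha}) gives $\alpha\,(e_x+i_x^F)=-(e_x+i_x^F)\,\alpha$. Writing $\lambda_F(x_1\otimes\cdots\otimes x_p)=(e_{x_1}+i_{x_1}^F)\cdots(e_{x_p}+i_{x_p}^F)(1)$ via Eq.~(\ref{eq:L1p}) and pushing $\alpha$ through all $p$ factors onto $\alpha(1)=1$ produces the sign $(-1)^p$, which is the claim. Equivalently one checks inductively from (ii) that $\lambda_F$ maps $\T^p(V)$ into $\bigoplus_{k\ge 0}\T^{p-2k}(V)$, a space on which $\alpha$ acts as multiplication by $(-1)^p$.
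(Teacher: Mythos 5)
Your proof is correct and follows essentially the same route as the paper: (i) and (ii) straight from the definition and Eq.~(\ref{eq:Lfxu}), (iii) by evaluating Eq.~(\ref{eq:iFu}) at $1$ using $i_f(1)=0$, (iv) by setting $v=w=1$ in Eq.~(\ref{eq:LFG}) together with the observation that $1\in\T^0(V)\subseteq\T(\mbox{Rad}_R(G))$ makes the radical restriction vacuous, (vi) from (iv) and (v) via $\lambda_{-F}=\lambda_F^{-1}$, and (vii) by anticommuting $\alpha$ past the $p$ factors $e_{x_i}+i_{x_i}^F$ in Eq.~(\ref{eq:L1p}). The only deviations are cosmetic: you prove (v) by induction from (ii) rather than via $\Lambda_0(x_1\otimes\cdots\otimes x_p)=e_{x_1}\cdots e_{x_p}$, and you cite Eqs.~(\ref{eq:Lfxu}) and (\ref{eq:iFu}) where they actually belong, whereas the paper's printed proof has these two cross-references swapped in items (ii) and (iii).
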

\begin{proof}
We have $$\lambda_F(\mathbf{1})=\Lambda_F(\mathbf{1})(\mathbf{1})=\Id_{\T(V)}(\mathbf{1})=\mathbf{1},$$ and $\lambda_F(x)=(e_x+i_x^F)(\mathbf{1})=x,$ therefore (i). From Eq. (\ref{eq:iFu}) we have
$$\lambda_F(x\otimes u)=\Lambda_F(x\otimes u)(\mathbf{1})=(e_x+i_x^F)\Lambda_F(u)(\mathbf{1})=(e_x+i_x^F)\lambda_F(u).$$ Therefore (ii) holds. Applying Eq. (\ref{eq:iFu}) to $\mathbf{1}\in \T(V),$ and using $i_f(\mathbf{1})=0,$  we get (iii). To prove (iv) we set $v=w=\mathbf{1}$ in Eq. (\ref{eq:LFG}). On the left hand side we get $\Lambda_F(\Lambda_G(u))$, while on the right hand side we notice that $w$ is now in $\T(\Rad(G))$ and we use $\Lambda_F(\mathbf{1})(\mathbf{1})=\mathbf{1}$ to get $\Lambda_{F+G}(u)$, as required. To show (v) notice that  $i_x^0=0,$ and therefore from Eq. (\ref{eq:L1p}) we have $\Lambda_0(x_1\otimes\cdots\otimes  x_p)=e_{x_1}\cdots e_{x_p}.$ It follows that $\lambda_0(x_1\otimes\cdots\otimes x_p)=e_{x_1}\cdots e_{x_p}1=x_1\otimes\cdots\otimes x_p$, and therefore $\lambda_0=\Id_{\T(V)}$. Now, using (iv) and (v), we get $\lambda_F\circ \lambda_{-F}=\lambda_{-F}\circ \lambda_{F}=\lambda_0=\Id_{\T(V)},$
therefore $\lambda_{-F}=(\lambda_F)^{-1}$, and so $\lambda_F$ is invertible. Thus (vi) holds. To prove (vii) notice that $\Lambda_{F,x}=e_x+i_x^F$ changes the parity:
\be \Lambda_{F,x}\circ \alpha=-\alpha\circ \Lambda_{F,x}.\nn\ee
Therefore, using Eq. (\ref{eq:L1p})we obtain
\be \Lambda_F(x_1\otimes\cdots\otimes x_p)\circ \alpha=(-1)^p\alpha\circ \Lambda_F(x_1\otimes\cdots\otimes x_p).\nn\ee
Applying both sides of the above equation to $\mathbf{1}\in \T(V)$ we get (vii).
\end{proof}
\begin{definition}
For $u\in \T(V)=\bigoplus_{p=0}^\infty \T^p(V)$ we will denote by $(u)_p$ the component of $u$ in $\T^p(V).$
\end{definition}
From the definition of the algebraic direct sum, it follows that for each $u\in \T(V)$ we have that $(u)_p=0$, except for a finite number of $p.$
\begin{lemma}\label{lem:par}
For $x_1,\ldots,x_p\in V$ we have that \be (\lambda_F(x_1\otimes\cdots\otimes x_p))_{p+k}=0,\, \mbox{for all } k>0.\nn\ee
\end{lemma}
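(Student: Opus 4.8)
The plan is to prove the statement by induction on $p$, exploiting the grading behaviour of the two operators out of which $\lambda_F$ is built. The crucial observation is that $e_x$ raises the tensor degree by exactly one, $e_x(\T^q(V))\subseteq \T^{q+1}(V)$, whereas $i_x^F=i_{f_x}$ lowers it by exactly one, $i_x^F(\T^q(V))\subseteq \T^{q-1}(V)$ by Proposition \ref{prop:if0}(ii). Thus, in the recursion of Proposition \ref{prop:7}(ii), namely $\lambda_F(x\otimes u)=i_x^F(\lambda_F(u))+x\otimes\lambda_F(u)$, the first summand strictly decreases the top degree while the second raises it by one, so no single step can push a component more than one degree above the current number of tensor factors.

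First I would settle the base case. For $p=0$ we have $\lambda_F(1)=1\in\T^0(V)$, and for $p=1$ we have $\lambda_F(x)=x\in\T^1(V)$ by Proposition \ref{prop:7}(i); in either case there is nothing in degree $>p$. For the inductive step I would assume that $\lambda_F(x_1\otimes\cdots\otimes x_p)$ has no component of degree exceeding $p$, and write $\lambda_F(x_1\otimes\cdots\otimes x_p)=\sum_{j=0}^{p}w_j$ with $w_j\in\T^j(V)$. Applying Proposition \ref{prop:7}(ii) to $x\otimes x_1\otimes\cdots\otimes x_p$ yields $i_x^F\bigl(\sum_j w_j\bigr)+x\otimes\bigl(\sum_j w_j\bigr)$. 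The contraction term sits in degrees $\le p-1$ and the left-multiplication term in degrees $\le p+1$, so the whole expression has no component of degree exceeding $p+1$, which is exactly the claim for $p+1$ factors and completes the induction.

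The main obstacle here is bookkeeping rather than any genuine difficulty: one must be careful to record that it is precisely the $\pm 1$ degree shifts of $e_x$ and $i_x^F$ that forbid any jump of two or more degrees, and that starting from $1\in\T^0(V)$ and applying $p$ such factors cannot reach past degree $p$. Indeed, an equivalent one-line argument bypasses induction altogether: by Eq. (\ref{eq:L1p}) we have $\lambda_F(x_1\otimes\cdots\otimes x_p)=(e_{x_1}+i_{x_1}^F)\cdots(e_{x_p}+i_{x_p}^F)(1)$, and since each of the $p$ factors changes the degree by $\pm 1$ while we begin at degree $0$, the maximal degree reachable is $p$. I would most likely present the inductive version for uniformity with the surrounding proofs, remarking in passing that the degree-$p$ component is in fact $x_1\otimes\cdots\otimes x_p$ itself (the unique term using all the $e$'s and none of the contractions), even though the lemma asserts only the vanishing above degree $p$.
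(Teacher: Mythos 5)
Your proof is correct and takes essentially the same approach as the paper's: induction on $p$ using the recursion $\lambda_F(x\otimes u)=x\otimes\lambda_F(u)+i_x^F(\lambda_F(u))$ from Proposition \ref{prop:7}(ii), with the same base cases $\lambda_F(1)=1$, $\lambda_F(x)=x$ and the same degree bookkeeping (the $e_x$ term bounded by degree $p+1$, the contraction term by $p-1$). Your closing remark giving the non-inductive one-liner via Eq. (\ref{eq:L1p}) is a valid shortcut the paper does not state, but the argument you actually present is the paper's.
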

\begin{proof}
Since $\lambda_F(\mathbf{1})=\mathbf{1}$ and $\lambda_F(x)=x,$ the statement in the lemma is true for $p=0$ and $p=1.$ Let us assume that it holds for a given $p$, and we will show that then it also holds for $p+1.$ Indeed, with $x,x_1,\ldots,x_p\in V$ we have
\be \lambda_F(x\otimes x_1\otimes\cdots\otimes x_p)=x\otimes \lambda_F(x_1\otimes\cdots\otimes x_p)+i_x^F (\lambda_F(x_1\otimes\cdots\otimes x_p)).\label{eq:lfhp}\ee From our induction assumption $\lambda_F(x\otimes x_1\otimes\cdots\otimes x_p)$ has no nonzero components of order higher than $p.$ Thus the first term on the right hand side of Eq. (\ref{eq:lfhp}) has no nonzero components of order higher than $p+1$, while the second term does not have nonzero components of order higher than $p-1$. Thus the assertion in the lemma holds.
\end{proof}
\begin{definition}For integers $p,k,$ with $p\geq 2,$ $2\leq 2k\leq p,$ let $\mathcal{P}_{p,k}$ denote the set of permutations $\pi$ of $\{1,...,p\}$ such that
\begin{enumerate}
\item $\pi(1)<\pi(2),\, \pi(3)<\pi(4),\,\ldots,\,\pi(2k-1)<\pi(2k),$
\item $\pi(1)<\pi(3)<\cdots<\pi(2k-1)$ \quad (if $k\geq 2),$
\item $\pi(j)$ is increasing for $j=2k+1,\ldots,p.$
\end{enumerate}
\label{def:perm}\end{definition}
\begin{note}
In the proof of Proposition \ref{prop:ln} an expression like $\mathcal{P}_{p,k}$ may also be used for permutations $\pi$ of a set of integers $\{i_1,i_2,\ldots,i_p\}$ that may be different from $\{1,2,\ldots,p\}$. If $i_1<i_2<\cdots <i_p,$ Definition \ref{def:perm}
remains meaningful when $1,2,\ldots,p$ have been replaced respectively with $i_1,i_2,\dots,i_p.$
\label{note:sigma}\end{note}
\begin{proposition}
Let $F$ be a bilinear form on $V,$ and assume $p\geq 2$, $x_1,\ldots,x_p\in V$. Then
\be
\begin{split}
&\lambda_F(x_1\otimes\cdots\otimes x_p)=x_1\otimes\cdots\otimes x_p\\
&+\sum_{{2\leq 2k\leq p}}\,\sum_{\pi\in\mathcal{P}_{p,k}}\sgn(\pi) F(x_{\pi(1)},x_{\pi(2)})F(x_{\pi(3)},x_{\pi(4)})\cdots\\
&\cdots F(x_{\pi(2k-1)},x_{\pi(2k)})\,x_{\pi(2k+1)}\otimes\cdots\otimes x_{\pi(p)}.
\end{split}
\label{eq:ls}\ee
\label{prop:ln}\end{proposition}
\begin{note*}
For $p=2$ we have
\be \lambda_F(x_1\otimes x_2)=x_1\otimes x_2+F(x_1,x_2).\nn\ee
For $p=3$:
\be \lambda_F(x_1\otimes x_2\otimes x_3)=x_1\otimes x_2\otimes x_3+F(x_1,x_2)x_3-F(x_1,x_3)x_2+F(x_2,x_3)x_1.\nn\ee
In general, when $p$ is even and $2k=p$, the tensor product of the empty set of vectors in Eq. (\ref{eq:ls}) is to be understood as ${\bf 1},$ as in the example with $p=2.$ If $p$ is odd, and if $2k+1=p$, instead of tensor products of several vectors we have simply vectors, as in the example above with $p=3.$
\end{note*}
\begin{proof}
From Proposition \ref{prop:7} and from Lemma \ref{lem:par} we know that if $u\in \T^p(V)$ then only the components
$(\lambda_F(u))_{p-2k}$ for which $p-2k\geq 0$ can be non-vanishing. It remains to prove that for $p-2k\geq 0$ we have
\be
\begin{split}
&\left(\lambda_F(x_1\otimes\cdots\otimes x_p)\right)_{p-2k}=\\&\sum_{\pi\in\mathcal{P}_{p,k}}\sgn(\pi) F(x_{\pi(1)},x_{\pi(2)})F(x_{\pi(3)},x_{\pi(4)})\cdots\\
&\cdots F(x_{\pi(2k-1)},x_{\pi(2k)})\,x_{\pi(2k+1)}\otimes\cdots\otimes x_{\pi(p)}.
\end{split}
\label{eq:lfk}
 \ee
 The proof is by induction. We know that $\lambda_F(1)=\mathbf{1}$ and $\lambda_F(x)=x.$ Therefore
 \be \lambda_F(x_1\otimes x_2)=x_1\otimes x_2+i_{x_1}^F(x_2)=x_1\otimes x_2+F(x_1,x_2).
 \nn\ee
 Thus, in this case, we have $p=2$ and only one term of the type as in Eq. (\ref{eq:lfk}), namely for $k=1.$ Let us assume now that Eq. (\ref{eq:lfk}) holds for $p$ smaller than or equal to a given $p-1\geq 2$. We will show that then it is also valid for $p,$ i.e. that
 \be\begin{split}
 &\left(\lambda_F(x_0\otimes\cdots\otimes x_p)\right)_{p+1-2k}=\\
 &\sum_{\pi\in\mathcal{P}_{p+1,k}}\sgn(\pi) F(x_{\pi(0)},x_{\pi(1)})F(x_{\pi(2)},x_{\pi(3)})\cdots\\
&\cdots F(x_{\pi(2k-2)}),x_{\pi(2k-1)})\,x_{\pi(2k)}\otimes\cdots\otimes x_{\pi(p)},
\end{split}\label{eq:ls1}\ee
where $\mathcal{P}_{p+1,k}$ denotes the set of permutations of $\{0,1,2,\ldots,p\}$ satisfying the conditions suggested by Note \ref{note:sigma}.
We have
\be
\begin{split}
&(\lambda_F(x_0\otimes x_1\otimes\cdots\otimes x_p))_{p+1-2k}\nn\\&=x_0\otimes(\lambda_F(x_1\otimes\cdots\otimes x_p))_{p-2k}+i_{x_0}^F\left(\lambda_F(x_1\otimes\cdots\otimes x_p)_{p+2-2k}\right)\nn\\
\end{split}\ee
The first term
 in the formula above gives those terms in Eq. (\ref{eq:ls1}) for $p+1$ in which $x_0$ does not participate as an argument of $F$. We can extend the set of indices setting $\pi(0)=0,$ with the correct signs of the permutations (as $0$ must be transposed with an even number of indices). The second term gives the terms in Eq. (\ref{eq:ls1}) in which $x_0$ participates as an argument of $F$. In that case it must participate as the first argument, and the operator $i_{x_0}^F$ makes sure that this happens. In order to obtain
\be (\lambda_F(x_0\otimes\cdots\otimes x_p))_{p+1-2k}\nn\ee
we need to take
\be i_{x_0}^F((\lambda_F(x_1\otimes\cdots\otimes x_p))_{p+2-2k}).\nn\ee
From Eq. (\ref{eq:iii}) we know that $i_{x_0}^F$ commutes with $\lambda_F$, therefore we need to calculate
\be (\lambda_F(i_{x_0}^F(x_1\otimes\cdots\otimes x_p)))_{p+1-2k}.\nn\ee
We know the action of $i_{x_0}^F$
\be
\begin{split} &i_{x_0}^F(x_1 \otimes\cdots\otimes x_p)\nn\\&=\sum_{l=1}^p(-1)^{l-1}\,F(x_0,x_l)\,x_1\otimes\cdots\otimes \hat{x}_l\otimes\cdots\otimes x_p,\nn
\end{split}\ee
where the hat denotes the factor omitted in the product. Therefore we need to calculate
\be
\sum_{l=1}^p(-1)^{l-1}\,F(x_0,x_l)(\lambda_F(x_1\otimes\cdots\otimes \hat{x}_l\otimes\cdots\otimes x_p))_{p+1-2k}.\nn
\ee
For this term not to be automatically zero, we need to take $k\leq [(p-1)/2].$ We set $k'=k-1$, then $p+1-2k=(p-1)-2k'$.
According to the induction assumption we have
\be \begin{split}
&(\lambda_F(x_1\otimes\cdots\otimes \hat{x}_l\otimes\cdots\otimes x_p))_{p-1-2k'}\nn\\
&=\sum_{\pi\in\mathcal{P }_{p-1,k'}}\sgn(\pi)F(x_{\pi(1)},x_{\pi(2)})\cdots \\ &F(x_{\pi(2k'-1)},x_{\pi(2k')})\,x_{\pi(2k'+1)}\otimes\ldots\otimes x_{\pi(p)}\nn
\end{split}\ee
where $\mathcal{P }_{p-1,k'}$ now means a set of permutations of $\{1,2,\ldots,\hat{l},\ldots,p\}$, that is $\{1,2,\ldots,p\}$ without $l.$ Now, since $(-1)^{l-1}\sgn{\pi}$ is the signature of the permutation $(0,1,\ldots,p)\mapsto (0,l,\pi(1),\ldots,\widehat{\pi(l)},\ldots,\pi(p))$, the formula (\ref{eq:ls}) holds also for $x_0,\ldots,x_p.$
\end{proof}
\subsubsection{$\lambda_F$ as an exponential}\label{sec:lexp}
\begin{definition}\label{def:ak}
Using the same notation as in Proposition \ref{prop:ln} we define
\be a_0^F= \Id_{\T(V)},\nn\ee
and for $k>0$ and $p\geq 2k$
\be
 \begin{split}&a_k^F(x_1\otimes\cdots\otimes x_p)=\nn\sum_{{2\leq 2k\leq p}}\,\sum_{\pi\in\mathcal{P}_{p,k}}\sgn(\pi) F(x_{\pi(1)},x_{\pi(2)})F(x_{\pi(3)},x_{\pi(4)})\cdots\\
&\cdots F(x_{\pi(2k-1)},x_{\pi(2k)})\,x_{\pi(2k+1)}\otimes\cdots\otimes x_{\pi(p)}. \label{eq:ak}\end{split}\ee
For $p<2k$ we set $a_k^F(x_1\otimes\cdots\otimes x_p)=0.$
\end{definition}
It is clear from the Proposition \ref{prop:ln} and from the definition of $a_k$ that we have:
\be \lambda_F=\sum_{k=0}^\infty a_k^F.\label{eq:lsum}\ee
We will show now that the sequence $(a_k^F)_{k\in {\bf N}}$ satisfies the relations that are characteristic for an {\it exponential sequence\,} - as discussed, for example, in Ref. \cite[A IV 87]{bourbaki47}:
\begin{proposition}
With $a_k$ defined as in the Definition \ref{def:ak} we have
\be a_l^F a_k^F =\left(\begin{smallmatrix}k+l\\k\end{smallmatrix}\right)a_{k+l}^F=a_k^F a_l^F,\label{eq:dp}\ee
where $\left(\begin{smallmatrix}k+l\\k\end{smallmatrix}\right)$
is the binomial coefficient
\be \left(\begin{smallmatrix}k+l\\k\end{smallmatrix}\right) =\left(\begin{smallmatrix}k+l\\l\end{smallmatrix}\right)=\frac{(k+l)!}{k!l!}.\nn\ee
\end{proposition}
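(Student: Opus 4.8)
The plan is to avoid re-deriving the signs of Proposition~\ref{prop:ln} and instead to read off the divided-power relation directly from the one-parameter group law already established in Proposition~\ref{prop:7}(iv), namely $\lambda_F\circ\lambda_G=\lambda_{F+G}$. The one extra ingredient I need is a homogeneity observation: in $a_k^F(x_1\otimes\cdots\otimes x_p)$ the bilinear form $F$ occurs in exactly $k$ scalar factors $F(x_{i_1},x_{j_1})\cdots F(x_{i_k},x_{j_k})$, whereas the sign $(-1)^\sigma$ and the truncated tensor $X_{(i_1,\ldots,j_k)}$ do not involve $F$ at all. Hence rescaling $F\mapsto tF$ by a scalar $t$ simply pulls out a factor $t^k$, giving $a_k^{tF}=t^k a_k^F$, and by Eq.~(\ref{eq:lsum}) the generating-function identity $\lambda_{tF}=\sum_{k\ge 0}t^k a_k^F$.

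First I would substitute $F\mapsto tF$ and $G\mapsto sF$ into the group law to get $\lambda_{sF}\circ\lambda_{tF}=\lambda_{(s+t)F}$, and then expand both sides by the generating-function identity:
\[
\sum_{l,k\ge 0}s^l t^k\,\bigl(a_l^F\circ a_k^F\bigr)=\sum_{m\ge 0}(s+t)^m a_m^F=\sum_{l,k\ge 0}\binom{l+k}{l}s^l t^k\,a_{l+k}^F .
\]
Comparing coefficients of $s^l t^k$ yields $a_l^F\circ a_k^F=\binom{l+k}{l}a_{l+k}^F$; since $\binom{l+k}{l}=\binom{l+k}{k}$ and the right-hand side is symmetric in $k$ and $l$, the full chain of equalities (\ref{eq:dp}), including the commutativity $a_k^F a_l^F=a_l^F a_k^F$, drops out at once.

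The step that needs care, and the one I would treat as the main obstacle, is the legitimacy of comparing coefficients in the two scalars $s,t$. Applied to any fixed homogeneous $u\in\T^p(V)$ each side is a genuine polynomial in $s,t$ with coefficients in $\T(V)$, since $a_k^F(u)=0$ as soon as $2k>p$ (Lemma~\ref{lem:par} and Definition~\ref{def:ak}), so the sums are finite and no convergence question arises. If $\BK$ is infinite the monomials $s^l t^k$ separate enough points to force equality of the coefficients. To cover a finite $\BK$, and indeed to legitimise treating $t$ as a free indeterminate in $a_k^{tF}=t^k a_k^F$ in the first place, I would extend scalars to $\BK[s,t]$ (or $\BK(s,t)$): the construction of $e_x$, $i_x^F$, $\Lambda_F$ and $\lambda_F$ goes through verbatim over the base ring $\BK[s,t]$ applied to $V\otimes_{\BK}\BK[s,t]$, so Proposition~\ref{prop:7}(iv) and Eq.~(\ref{eq:lsum}) persist there, where the $s^l t^k$ are honestly linearly independent and the coefficient comparison is unconditional.

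As a cross-check, and as an alternative route for anyone wishing to stay over $\BK$, I would verify the identity directly on a decomposable tensor $x_1\otimes\cdots\otimes x_p$ using the combinatorial meaning of $a_k^F$: it sums over partial matchings with $k$ edges, contracting each matched pair by $F$ and deleting those factors. Composing $a_l^F$ after $a_k^F$ then produces every matching with $k+l$ edges, each arising once for every choice of which $k$ of its edges are applied first, i.e. $\binom{k+l}{k}$ times, and the $F$-weights multiply because the two matchings are disjoint. The only delicate point on this route is that the sign $(-1)^\sigma$ of the combined matching equals the product of the two partial signs; this holds because transposing two entire pairs is an even permutation, so $(-1)^\sigma$ is insensitive to the order in which the pairs are extracted. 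That sign bookkeeping is precisely what the generating-function argument lets me sidestep, which is why I would present the latter as the primary proof.
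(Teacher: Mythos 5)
Your proof is correct, but it takes a genuinely different route from the paper. The paper proves Eq.~(\ref{eq:dp}) directly on decomposable tensors by the combinatorial count that you relegate to your cross-check: it expands $a_l^F(a_k^F(x_1\otimes\cdots\otimes x_p))$, observes that each term of $a_{k+l}^F(x_1\otimes\cdots\otimes x_p)$ arises once for every $l$-element subset of the corresponding $(k+l)$-element set of contracted pairs, and asserts that the signs come out right (the paper in fact glosses the sign verification with the phrase ``it will appear with the correct sign,'' and its stated multiplicity $(k+l)!/l!$ is an arithmetic slip for $\binom{k+l}{l}=(k+l)!/(k!\,l!)$; your observation that extracting whole pairs in either order differs by an even permutation is exactly the missing justification). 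Your primary argument instead derives the divided-power relation formally from the group law $\lambda_F\circ\lambda_G=\lambda_{F+G}$ of Proposition~\ref{prop:7}(iv), the expansion $\lambda_F=\sum_k a_k^F$ of Eq.~(\ref{eq:lsum}), and the homogeneity $a_k^{tF}=t^k a_k^F$, comparing coefficients of $s^l t^k$ in $\lambda_{sF}\circ\lambda_{tF}=\lambda_{(s+t)F}$. There is no circularity, since both inputs are established before the proposition, and you correctly identify and resolve the only delicate point: coefficient comparison is unproblematic for infinite $\BK$, and for finite fields (and to make $s,t$ honest indeterminates) one extends scalars to $\BK[s,t]$, over which the constructions of $i_x^F$, $\Lambda_F$, $\lambda_F$ and their properties carry over verbatim --- consistent with the paper's own remark that the theory works for modules over commutative rings. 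What each approach buys: the paper's count is self-contained and elementary but demands the sign bookkeeping it does not fully spell out; your generating-function argument sidesteps the signs entirely and is uniform in all characteristics, at the modest cost of the scalar-extension apparatus.
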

\begin{proof}
With $u\in \T^p(V)$, for $a_k^F(u)$ to be non-zero it is necessary that $p\geq 2k$, then $u'=a_k^F(u)$ is of grade $p'=p-2k$. For $a_l^F(u)'$ to be nonzero, we must have $p'-2l\geq 0$, i.e. $p\geq 2(k+l)$, as required by the statement. Assuming now $p\geq 2(k+l)$, we have
\be\begin{split}
 &a_l^F(a_k^F(x_1\otimes\cdots\otimes x_p))=\sum_{\pi\in\mathcal{P}_{p,k}}\sgn(\pi) F(x_{\pi(1)},x_{\pi(2)})F(x_{\pi(3)},x_{\pi(4)})\cdots\\
&\cdots F(x_{\pi(2k-1)},x_{\pi(2k)})\,a_l^F\left(x_{\pi(2k+1)}\otimes\cdots\otimes x_{\pi(p)}\right).\end{split}\label{eq:akl}\ee
Each of the terms in the expansion of $a_{k+l}^F(x_1\otimes\cdots\otimes x_p)$ is of the form
\be \sgn(\pi)F(x_{i_1},x_{j_1})\cdots F(x_{i_{k+l}},j_{k+l})X_{(i_1\ldots j_{k+l})},\nn\ee
where $i_1<i_2<\cdots <i_{k+l}$, $i_1<j_1,\ldots,i_{k+l}<j_{k+l},$ and $X_{(i_1\ldots j_{k+l})}$ stands for $x_1\otimes \cdots\otimes x_p$ with $x_{i_1},\ldots,x_{j_{k+l}}$ removed. Every such term can be obtained in $(k+l)!/(k!l!)$ ways from the terms
obtained in Eq. (\ref{eq:akl}) by selecting a subsequence of $k$ pairs $(i_m,j_m)$, and extending it to the whole  sequence of $(k+l)$ pairs via the action of $a^F_l$ as in Eq. (\ref{eq:akl}). All these terms will come with the required signature.
\end{proof}
\begin{rem}
It is clear from the proof above that, for any two bilinear forms $F,G$  the operators $a_k^F$ and $a_l^G$ commute. Therefore the family
$\{a_k^F\}$, with $k$ running through all natural integers and $f$ running through all bilinear forms generates a commutative subalgebra of $\End(\T(V)).$
\end{rem}
\begin{cor}
Assuming the characteristic of $\BK$ is $0$, and setting $a_F=a_1^F$, we have
\be \lambda_F = \exp(a_F).\ee
\end{cor}
\begin{proof}
From the definition and from Eq. (\ref{eq:dp}) we have
$$ (a_F)^2=a_1^F a_1^F=2!\, a_2^F,$$
$$(a_F)^3=a_F (a_F)^2=2!\, a_1^F a_2^F=2!\,\frac{3!}{2!}a_3^F=3!\,a_3^F,$$
and, in general,
\be (a_F)^n=n!\,a_n^F.\ee
Thus, from Eq. (\ref{eq:lsum}) we have
\be \lambda_F=\sum_{n=0}^\infty \frac{1}{n!}\,(a_F)^n=\exp(a_F).\nn\ee
The series is finite when applied to any element $u\in \T(V)$, because $\T(V)=\bigoplus_{p=0}^\infty \T^p(V)$, by the definition of the algebraic direct sum, consists of elements $u=\oplus_{p=0}^\infty\,  u_p$, $u_p\in \T^p(V)$, for which $u_p\neq 0$ only for a finite number of indices $p.$
\end{proof}
\subsubsection{The case of an alternating $F$ - the Pfaffian}
A bilinear form $F$ on $V$ is called alternating  if $F(x,x)=0$ for all $x\in V.$ If $F$ is alternating, then $F(x+y,x+y)=0$ for all $x,y\in V$, and therefore, by bilinearity, the form $F$ is antisymmetric: $F(x,y)=-F(y,x)$. Let $x_1,\ldots,x_p\in F.$ Then the matrix $f_{ij}=F(x_i,x_j)$ is alternating, i.e. $f_{ii}=0,\, f_{ij}=-f_{ji}.$

For an even-dimensional alternating matrix $f$ one defines the Pfaffian $\mbox{Pf}\,(f)$ as follows\footnote{See e.g. Ref. \cite[pp. 82-83]{bourbaki1959}.}
\begin{definition}
The Pfaffian of a $2n\times 2n$ alternating matrix $f=(f_{ij})$ is defined as
\be \mbox{Pf}\,(f)=\sum_\pi\, \sgn(\pi\,) f_{i_1j_1}f_{i_2j_2}\ldots f_{i_nj_n},\nn\ee
where the sum is over the set of all permutations $\pi$ on the set $\{1,2,\ldots,2n\}$ of the form
\be \pi=\begin{pmatrix}1&2&3&4&\cdots &2n\\i_1&j_1&i_2&j_2&\cdots &j_n\end{pmatrix},\nn\ee
for $i_k<j_k$ and $i_1<i_2<\cdots <i_n.$ The signature $\sgn(\pi)$ of the permutation $\pi$ is given by $(-1)^m$, where $m$ is the number of transpositions in $\pi.$
\label{def:pfa}\end{definition}
It is well known\footnote{See e.g. Ref. \cite[p.83]{bourbaki1959}.}  that the square of the Pfaffian of an even-dimensional alternating matrix is equal to its determinant. Comparing the Definition \ref{def:pfa} with the formula (\ref{eq:ls}) in Proposition \ref{prop:ln} and Definition \ref{def:ak}, we arrive at the following Corollary:
\begin{cor}
Let $F$ be an alternating bilinear form on $V$, let $x_1,\ldots,x_p\in V,$ with $p\geq 2,$ $p$ even, $p=2n.$  Then
\be \begin{vmatrix}F(x_1,x_1)&F(x_1,x_2)&\cdots &F(x_1,x_p)\\
F(x_2,x_1)&F(x_2,x_2)&\cdots &F(x_2,x_p)\\
.&.&\cdots &.\\
.&.&\cdots &.\\
.&.&\cdots &.\\
F(x_p,x_1)&F(x_p,x_2)&\cdots  &F(x_p,x_p)\end{vmatrix}=\omega^2,\nn\ee
where
\be \omega=(\lambda_F (x_1\otimes \cdots  \otimes x_p))_0=a_n^F(x_1\otimes \cdots  \otimes x_p).\nn\ee
\QED\end{cor}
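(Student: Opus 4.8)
The plan is to identify the scalar $\omega$ with the Pfaffian of the matrix $f_{ij}=F(x_i,x_j)$ and then invoke the classical identity relating the Pfaffian and the determinant of an even-dimensional alternating matrix, which the excerpt has already recalled.

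First I would isolate the grade-zero component of $\lambda_F(x_1\otimes\cdots\otimes x_p)$. By Proposition \ref{prop:ln} together with Definition \ref{def:ak}, the component of grade $p-2k$ equals $a_k^F(x_1\otimes\cdots\otimes x_p)$; since $p=2n$, the only summand landing in $\T^0(V)=\BK$ is the one with $k=n$. For $k=n$ and $p=2n$ every factor is consumed by the pairing, so the residual tensor $X_{(i_1,\ldots,j_n)}$ is the empty product $1\in\BK$, confirming both equalities asserted for $\omega$ and giving the explicit scalar
$$\omega=\sum_{(i_1,\ldots,j_n)}(-1)^\sigma F(x_{i_1},x_{j_1})\cdots F(x_{i_n},x_{j_n}),$$
where the sum ranges over $1\le i_1<j_1,\ldots,i_n<j_n\le 2n$ with $i_1<i_2<\cdots<i_n$.

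The key step is then a term-by-term comparison of this expression with Definition \ref{def:pfa}. Both sums range over exactly the same set of ordered pair partitions of $\{1,\ldots,2n\}$, and in each case the coefficient is the signature of the permutation sending $(1,2,\ldots,2n)$ to $(i_1,j_1,\ldots,i_n,j_n)$: the factor $(-1)^\sigma$ of Proposition \ref{prop:ln} is by definition the parity of precisely this rearrangement, so it agrees with $\mbox{sgn}(\pi)$ in Definition \ref{def:pfa}. Hence $\omega=\mbox{Pf}(f)$ with $f=(F(x_i,x_j))$, which is alternating because $F$ is alternating.

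Finally, I would apply the recalled fact that for an even-dimensional alternating matrix the square of the Pfaffian equals the determinant, obtaining $\det(f)=\mbox{Pf}(f)^2=\omega^2$, as claimed. The only point demanding care is the bookkeeping of signs in matching $(-1)^\sigma$ with $\mbox{sgn}(\pi)$; but since both are defined as the parity of the same reordering permutation, no genuine difficulty arises, and the substance of the result is carried entirely by the Pfaffian--determinant identity, which is taken as known.
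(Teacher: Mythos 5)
Your proposal is correct and follows exactly the route the paper intends: the paper states the corollary with no written proof, deriving it precisely by comparing Definition \ref{def:pfa} with Eq. (\ref{eq:ls}) and Definition \ref{def:ak} to identify $\omega=a_n^F(x_1\otimes\cdots\otimes x_p)$ with $\mbox{Pf}(f)$, and then invoking the recalled fact that the square of the Pfaffian of an even-dimensional alternating matrix equals its determinant. You have merely made explicit the details (isolating the $k=n$ grade-zero term and matching $(-1)^\sigma$ with $\mbox{sgn}(\pi)$, noting a permutation and its inverse share the same parity) that the paper leaves to the reader.
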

 \section{Clifford algebras}
 \subsection{Quadratic forms}
 Given a module $M$ over a commutative ring $R$ there are two definitions possible of a quadratic form on $M$, one more general than the other if rings with an arbitrary characteristic are being considered.
 The following, more general, definition is used, in particular, in Bourbaki \cite{Bourbaki1998}, Chevalley \cite{chevalley1996}, and Helmstetter \& Micali \cite{Helmstetter2008}:
 \begin{definition}[Quadratic form I]\label{def:chev}
 Let $M$ be a module over a commutative ring $R.$ A mapping $Q:M\rightarrow R$ is called a quadratic form on $M$ if the following conditions are satisfied:
 \begin{enumerate}
 \item \be Q(\alpha x)=\alpha^2\,Q(x)\, \mbox{for all } \alpha\in R,\,x\in M,\label{eq:qa}\ee
 \item There exists a bilinear form $\Phi(x,y)$ on $M$ such that for all $x,y\in M$ we have
\be \Phi(x,y)=Q(x+y)-Q(x)-Q(y).\label{eq:beta}\ee
 \end{enumerate}
 We say that the bilinear form $\Phi$ is the  {\it polar form\,} associated with the quadratic form $Q.$ Sometimes $\Phi$ is also called {\em the polar form of $Q$}. It follows
 from its very definition that $\Phi$ is {\em symmetric}: $\Phi(x,y)=\Phi(y,x)$ for all $x,y\in M.$
 \end{definition}
We can combine Eqs. (\ref{eq:beta}) and (\ref{eq:qa}) into:
 \be Q(\alpha x+\beta y)=\alpha^2Q(x)+\beta^2 Q(y)+\alpha\beta\Phi(x,y).\label{eq:qab}\ee

 The short discussion of consequences given below is taken directly from Ref. \cite{Helmstetter2008}.

 \begin{note}From the very definition we find that \be\Phi(x,x)=Q(2x)-2Q(x)=4Q(x)-2Q(x)=2Q(x)\label{eq:q2}.\ee It follows that {\bf if $R$ is of characteristic 2}, then $\Phi(x,x)=0$ for all $x\in R.$  Such a form is called {\em alternating }. In that case, since also $\Phi(x+y,x+y)=0$,  we have that
 \be
 \begin{split}0=\Phi(x+y,x+y)=\Phi(x,x)+\Phi(x,y)+\Phi(y,x)+\Phi(y,y)\nn\\
 =\Phi(x,y)+\Phi(y,x),
 \nn\end{split}\ee
 so that in this case the form $\Phi$ is antisymmetric  $\Phi(x,y)=-\Phi(y,x).$
\end{note}

 Getting back to a general characteristic, we may also notice at this point that {\em if the mapping $x\mapsto 2x$ is surjective} in $M,$ then the form $\Phi$ determines $Q.$ Indeed, setting $y=2x$ we get $Q(y)=Q(2x)=4Q(x)=2\Phi(x,x).$ We also observe that the quadratic form $Q$ is  determined by the associated bilinear form $\Phi$ when the mapping $\alpha\mapsto 2\alpha$ is injective in $R,$ in particular if multiplication by $\frac{1}{2}$ makes sense in $R.$ In that case we can solve Eq. (\ref{eq:q2}) to obtain $Q(x)=\frac{1}{2}\Phi(x,x).$

 In applications to Clifford algebras, unless we are interested in very special cases like characteristic $2,$ it is more convenient to use a little bit different definition of a quadratic form, as given, for instance, in Ref. \cite[p. 199]{sulanke2008}:
 \begin{definition}[Quadratic form II]\label{def:sul}
 Let $M$ be a module over a commutative ring $R.$ A function $Q:M\rightarrow R$ is called a quadratic form if there exists a bilinear form $F:M\times M\rightarrow R$ such that
\be Q(x)=F(x,x).\label{eq:qsul}\ee
 \end{definition}
 It follows from this last definition that the condition in Eq.(\ref{eq:qa}) is then automatically satisfied, and also the condition in Eq.(\ref{eq:beta}) is automatically satisfied with \be \Phi(x,y)=F(x,y)+F(y,x).\label{eq:bb}\ee
\begin{rem}
If the module $M$ admits a basis (in particular, when it is a vector space), then given a quadratic form $Q$ as in Def. \ref{def:chev} one can always  construct a bilinear form $F$  (in general a non symmetric one) such that $Q(x)=F(x,x)$ (cf. Sec. \ref{sec:bf2} below).
\label{rem:qb}\end{rem}
\begin{definition}
If $F$ is a bilinear form on $V$, then the mapping $x\mapsto F(x,x)$ is a quadratic form. We will denote this form by $Q_F$:
\be Q_F(x)=F(x,x),\, x\in V.\nn\ee
\label{def:qb}\end{definition}
\subsubsection{Constructing a bilinear form in characteristic 2}\label{sec:bf2}
The construction here is taken from Ref. \cite[Proposition 2, p. 55]{Bourbaki2006}). \footnote{The construction can be also found in Ref. \cite[I.2.2, p.76]{chevalley1996}, but under the assumption that $V$ is finite dimensional.}

 Let $Q$ be a quadratic form on a vector space $V$ over a field $\BK.$ We start with noticing that $V,$ being a vector space, has a basis $\{e_i\}_{i\in I}.$ We select one such basis. By the well-ordering theorem every set can be well ordered, and we will assume that the index set $I$ is well ordered. Since $\{e_i\}_{i\in I}$ is a basis,  every bilinear form $F$ is uniquely determined by the coefficients $f_{ij}$, $i,j\in I.$ Let $\Phi$ be the bilinear form associated to $Q.$ We first observe that if $\{\alpha_i\}_{i\in I}$ is any family of elements of $\BK$ with only a finite number of $\alpha_i\neq 0,$ then
\be Q(\sum_i \alpha_i e_i)=\sum_i \alpha_i^2Q(e_i)+\sum_{\{i<j\}}\alpha_i\alpha_j\Phi(e_i,e_j),\label{eq:rec1}\ee
where the last sum is over all {\em two-element subsets $\{i,j\}$ of} $I.$ \footnote{Thus if $a_{i_1}$ and $a_{i_2}$ are nonzero, with $i_1<i_2$, then only $\Phi(e_{i_1},e_{i_2})$ enters the sum, and not $\Phi(e_{i_2},e_{i_1}),$ because $\{i_2,i_1\}$ is the same subset as $\{i_1,i_2\}.$ }

It is understood that each sum is over a finite set determined by non-zero $\alpha_i$-s. We prove Eq. (\ref{eq:rec1}) by induction with respect to the number $n$ of nonzero coefficients $\alpha_i$. If there are only two nonzero coefficients, then (\ref{eq:rec1}) follows from Eq. (\ref{eq:qab}), i.e. from the definition of the quadratic form \ref{def:chev}. Assume now that Eq. (\ref{eq:rec1}) holds for subsets $\{i_1,\ldots,i_n\}$ of $n$ non-zero coefficients $\alpha_i$, and let us add another non-zero coefficient $\alpha_{i_{n+1}}.$ Then
\begin{multline}
Q(\alpha_{i_1}e_{i_1}+\cdots +\alpha_{i_{n+1}}e_{i_{n+1}})=Q\left((\alpha_{i_1}e_{i_1}+\cdots +\alpha_{i_{n}}e_{i_{n}}\right)+\alpha_{i_{n+1}}e_{i_{n+1}})=\notag\\
Q(\alpha_{i_1}e_{i_1}+\cdots +\alpha_{i_n}e_{i_n})+Q(\alpha_{i_{n+1}}e_{i_{n+1}})+\Phi(\alpha_{i_1}e_{i_1}+\cdots +\alpha_{i_n}e_{i_n},\alpha_{i_{n+1}}e_{i_{n+1}})\notag.
\nn\end{multline}
Using now the quadratic form property, in particular for the sum of two elements, the assumed property for the sum of $n$ elements, as well as linearity of $\Phi$ in the first argument leads to the desired result. Nowhere do we need to assume that the basis $\{e_i\}_{i\in I}$ is finite.

Given a quadratic form $Q$ we can now define a bilinear form $F$ satisfying $Q(x)=F(x,x)$ by defining its coefficients $f_{ij},$ $i,j\in I,$ as follows:
\begin{eqnarray} f_{ii}&=&Q(e_i),\nn\\
 f_{ij}&=&\Phi(e_i,e_j),\, i<j,\nn\\
 f_{ij}&=&0,\, i>j.
 \end{eqnarray}
 We now check that $Q(x)=F(x,x)$ for every $x$ in $V.$ If $x\in V$ then $x=\sum_i\alpha_i e_i,$ with only a finite number of non-zero terms in the sum. Therefore, using Eq. (\ref{eq:rec1}) we have
\be Q(x)=\sum_i \alpha_i^2Q(e_i)+\sum_{i<j}\alpha_i\alpha_j\Phi(e_i,e_j).\label{eq:rec2}\ee
On the other hand, from bilinearity of $F$ we have
$$ F(x,x)=F(\sum_i \alpha_i e_i,\sum_j \alpha_j e_j)=\sum_i \alpha_i^2 f_{ii}+\sum_{i\neq j}\alpha_i\alpha_j f_{ij}=Q(x)$$
from the definition of the coefficients $f_{ij}$ above (because $ f_{ij}=0$ for $i>j$) .
\subsubsection{Action of bilinear forms on quadratic forms}\label{sec:abf}
Let $V$ be a vector space over $\BK$. The set of quadratic forms on $V$ is a vector space. We will denote it $\mbox{Quad }(V).$ We will denote $\bv(V)$, $\sv(V)$, $\av(V)$ the vector spaces of bilinear, symmetric, and alternating  forms respectively. If $Q\in \qv(V)$ and $F\in \bv(V)$ we will write $Q+F$ for the quadratic form $Q'(x)=Q(x)+Q_F(x).$ By Remark \ref{rem:qb} this action is transitive. If the characteristic of $\BK$ is $\neq 2$, then, given $Q,Q'$ there is a unique {\it symmetric\,} form $F$ such that $Q'=Q+F.$ On the other hand, in characteristic 2 there are quadratic forms $Q$ that are not of the form $Q=Q_F$, for $F\in \sv(V).$ \footnote{For a simple example of such a $Q$,  for $\BK=\BZ/\BZ_2$, in the two dimensional space $\BK^2$, see ref. \cite[p. 295, Example]{lounesto2001}.}
\subsection{Clifford algebra - definition}
 Let $V$ be a vector space over the (commutative) field $\BK$, let $Q$ be a quadratic form on $V$  (see Def. \ref{def:chev}), and let $I(Q)$ be the two-sided ideal in $\T(V)$ generated by elements of the form $x\otimes x-Q(x)\mathbf{1}$, where $x\in V\subset \T(V).$ The ideal $I(Q)$ consists of all finite sums of elements of the form $x_1\otimes\ldots\otimes x_p\otimes (x\otimes x-Q(x)\mathbf{1})\otimes y_1\otimes \ldots\otimes y_q,$ where $x,x_1,\ldots,x_p,y_1,\ldots ,y_q$ are in $V.$
\begin{definition}[Clifford algebra]\footnote{Here we follows the standard definition as in  \cite[p. 35]{chevalley1996}. An alternative definition can be found e.g. in \cite[p. 8]{gilbert91}}
With $V$ and $Q$ as above the quotient algebra $\cl(V,Q)=\T(M)/I(Q)$ is called the Clifford algebra associated to $V$ and $Q.$
\label{def:cli}\end{definition}
Denoting by $\pi_Q:\T(V)\rightarrow \cl(V,Q)$ the canonical mapping, $\pi_Q(V)$ is a subspace of $\cl(V,Q)$ that generates (together with $\pi_Q(\mathbf{1})$) $\cl(V,Q)$ as an algebra.

The case of $Q=0$ is special. The ideal $I(Q)$ is then generated by homogeneous elements $x\otimes x$ and the algebra $\cl(0)$ (i.e. $\cl(Q)$ for $Q=0$) is nothing but {\em the exterior algebra} $\bigwedge(V)$ of $V.$ All homogeneous elements of $I(0)$ are then of degree at least $2,$ therefore no non-zero element of $V$ can belong to this ideal. It then easily follows that in this case the mapping $x\mapsto \pi_Q(x)$ is an embedding and $V$ can be always identified with the grade 1 subspace of $\Lambda(V)=\cl(0)$.\label{sec:cad}
\begin{rem}[A note on the non-triviality of $I(Q).$]
How to make sure that $\mathbf{1}_{\T(V)}$ is not in $I(Q)$? This is easy for the exterior algebra $\bigwedge(V)=\cl(V,0)$, which is $\BZ$-graded and the ideal $I(Q)$ is generated by homogeneous expressions $x\otimes x.$ To show that $I(Q)\neq \T(V)$ in a general case, one can proceed as follows. Let $F$ be a bilinear form such that\footnote{Cf. Sec. \ref{sec:bf2}.} $Q(x)=F(x,x)$, and consider the linear map $L_F:V\rightarrow \mbox{End}(\bigwedge(B))$ defined as $L_F:x\mapsto \bar{\Lambda}_F(x)=\bar{e}_x+\bi_x^F,$ where $\bar{e}_x$ and $\bi_x^F$ are defined as in Sec.\ref{sec:ops}, except that we use them on $\bigwedge(V)$ instead of on $\T(V)$.\footnote{We will discuss these operations in extenso in the next section.} By the universality of the tensor algebra $\T(V)$,  $L_F$ extends uniquely  to an algebra homomorphism of unital algebras (denoted by the same symbol) $L_F:\T(V)\rightarrow \End(\bigwedge(V)).$ It is easy to see that $\bbe_x^2=(\bi_x^F)^2=0,$ and $\bbe_x\,\bi_x^F+\bi_x^F\,\bbe_x=F(x,x)=Q(x),$ and therefore $L_F$ vanishes on the ideal $I(Q)$. But, on the other hand, $L_F(\mathbf{1})=\Id_{\bigwedge(V)},$ therefore $\mathbf{1}_{\T(V)}\notin I(Q),$ and $\pi_Q(\mathbf{1}_{\T(V)})$ can be identified with the unit element of $\BK\subset \cl(Q).$ In the following we will often simply write $\mathbf{1}$ instead of $\pi_Q(\mathbf{1})$ or $\mathbf{1}_{\cl(Q)}.$

Moreover, since $L_F(x)(\mathbf{1}_{\bigwedge(V)})=x\in\bigwedge(V),$ it follows that the mapping $V\ni x\mapsto \pi_Q(x)$ is injective.
\label{rem:1}\end{rem}

The following proposition is an immediate consequence of the universal property of the tensor algebra and of the definition of the Clifford algebra above (c.f. \cite[Proposition 1, p. 140]{bourbaki1959}).
\begin{proposition}[Universal property]\label{pro:up}
Let $\phi$ be a linear map from $V$ into an algebra $\mathcal{A}$ (with unit $\mathbf{1}$), such that
$\phi(x)^2=Q(x)\mathbf{1}$ for all $x\in V.$ Then $\phi$ can be extended to a unique algebra homomorphism $\bar{\phi}:\cl(Q) \rightarrow \mathcal{A}$. If $\mathcal{A}$ is $\BZ_2$-graded, and $\phi$ maps $V$ into an odd subspace of $\mathcal{A}$, then $\bar{\phi}$ is a homomorphism of $\BZ_2$-graded algebras.
\end{proposition}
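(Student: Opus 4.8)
The plan is to obtain $\bar\phi$ by pushing $\phi$ up to the tensor algebra and then factoring through the defining ideal. First I would apply the universal property (ii) of $\T(V)$ to the linear map $\phi:V\to\mathcal{A}$: since $\mathcal{A}$ is an associative unital $\BK$-algebra, there is a unique algebra homomorphism $\theta:\T(V)\to\mathcal{A}$ with $\phi=\theta\circ\iota$, i.e. $\theta$ restricts to $\phi$ on $V\subset\T(V)$ and $\theta(1)=1_{\mathcal{A}}$.

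The central step is to verify that $\theta$ annihilates the two-sided ideal $I(Q)$. On a generator I would compute, using that $\theta$ is a homomorphism,
\be
\theta(x\otimes x-Q(x)1)=\theta(x)\theta(x)-Q(x)\theta(1)=\phi(x)^2-Q(x)1_{\mathcal{A}}=0,\nn
\ee
which vanishes precisely by the hypothesis $\phi(x)^2=Q(x).1$. Since $I(Q)$ consists of finite sums $u\otimes(x\otimes x-Q(x)1)\otimes v$ and $\theta$ is multiplicative, $\theta$ kills every such element, so $I(Q)\subseteq\ker\theta$. By the universal property of the quotient $\cl(Q)=\T(V)/I(Q)$, the map $\theta$ then descends to a unique algebra homomorphism $\bar\phi:\cl(Q)\to\mathcal{A}$ with $\bar\phi\circ\pi_Q=\theta$; in particular $\bar\phi(\pi_Q(x))=\phi(x)$, so $\bar\phi$ extends $\phi$. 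Uniqueness of $\bar\phi$ follows because $\pi_Q(V)$ generates $\cl(Q)$ as an algebra: any two homomorphisms agreeing on $\pi_Q(V)$ agree everywhere.

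For the graded refinement I would first note that the generators $x\otimes x-Q(x)1$ of $I(Q)$ are homogeneous of even parity for the $Z_2$-gradation of $\T(V)$ (both $x\otimes x\in\T^2(V)$ and $Q(x)1\in\T^0(V)$ are even), so $I(Q)$ is a $Z_2$-homogeneous ideal and the even-odd gradation of $\T(V)$ descends to $\cl(Q)$, with $\pi_Q(V)$ odd. If $\mathcal{A}$ is $Z_2$-graded and $\phi(V)$ lies in the odd part, then $\bar\phi$ sends each odd generator into the odd component; since $\bar\phi$ is multiplicative and the parity of a product of generators adds modulo $2$ in both algebras, $\bar\phi$ carries the even (resp. odd) part of $\cl(Q)$ into the even (resp. odd) part of $\mathcal{A}$, i.e. it is a homomorphism of $Z_2$-graded algebras.

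I do not expect a genuine obstacle here: the argument is a direct two-stage application of universal properties (tensor algebra, then quotient), and the only points requiring care are the explicit check that the ideal generators map to zero and the observation that $I(Q)$ is $Z_2$-homogeneous so that the grading passes to the quotient. Both are immediate from the definitions already recorded in the excerpt.
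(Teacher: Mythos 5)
Your proposal is correct and follows essentially the same route as the paper's own (very terse) proof: the paper likewise derives the first part from the universal property of $\T(V)$ together with the definition of $\cl(Q)$ as the quotient $\T(V)/I(Q)$, and the graded part from the fact that $V$ generates $\cl(Q)$ with the even (resp.\ odd) part generated by products of an even (resp.\ odd) number of elements of $V$. You have merely filled in the details the paper leaves implicit (the computation $\theta(x\otimes x - Q(x)1)=0$ and the $Z_2$-homogeneity of $I(Q)$), all of which are correct.
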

\begin{proof}
As mentioned above, the first part of the proposition follows from the universal property of the tensor algebra and the definition \ref{def:cli} of $\cl(Q).$ The second part follows from the fact that $V$ generates $\cl(V),$ and that the even (resp. odd) part of $\cl(V)$ is generated by products of even (resp. odd) number of elements of $V.$
\end{proof}
The ideal $I(Q)$ is stable under the main involution $\alpha$ and the main anti-involution $\tau$ (Eqs. (\ref{eq:alpha}) and (\ref{eq:tau})). Therefore $\alpha$ and $\tau$ descend to the quotient Clifford algebra $\cl(Q)$. We shall use the same symbols $\alpha,\tau$ to denote the induced involution and anti-involution on $\cl(Q).$ While the tensor algebra $\T(V)$ is $\BZ-$graded, the Clifford algebras $\cl(Q)$ for $Q\neq 0$ are only $\BZ_2$-graded:
\be \cl(Q)=\cl^+(Q)\oplus \cl^-(Q), \nn\ee
where $\cl^\pm(Q)$ are the images of even/odd parts of the tensor algebra. If the mapping $x\mapsto 2x$ of $\cl(Q)$ to itself is injective, then
\be \cl^{\pm}(Q)=\{u\in \cl(Q):\alpha(u)=\pm u\}.\nn\ee
Alternatively $\cl^+(Q)$ (resp. $\cl^-(Q))$ can be defined as the linear subspace of $\cl(Q)$ generated by the products
$x_1\cdots x_p$, $x_i\in V$, with $p$ even (resp. odd). We have
\be
\begin{split}\cl(Q)^+\cl(Q)^+\subset \cl(Q)^+,\,\cl(Q)^+\cl(Q)^-\subset \cl(Q)^-,\,\\ \cl(Q)^-\cl(Q)^-\subset \cl(Q)^+.
\end{split}\label{eq:eo}\ee
In particular $\cl(Q)^+$ is a subalgebra of $\cl(Q).$
It is called {\em the even Clifford algebra}.
\section{Natural operations within and between Clifford algebras}
In this section, the main section of this paper, we will study the natural linear operations acting  within and between Clifford algebras over the same vector space $V.$ Connecting Clifford algebras $\cl(V,Q)$ and $\cl(V,Q')$ can be achieved by operations at the higher level, namely at the level of the tensor algebra $\T(V)$ - the one source of all $\cl(V,Q)$'s,  through the quotient maps.
\subsection{Anti-derivations $\bi_f$.}
\label{sec:ibf}
\begin{lemma}\label{lem:ibf}
If $Q$ is a quadratic form on $V$ then the ideal $I(Q)$ is stable under $i_f$, that is $i_f(I(Q))\subset I(Q),$ and thus $i_f$ defines an endomorphism $\bi_f$ on the Clifford algebra $\cl(Q)=\T(V)/I(Q)$:
\be \pi_Q\circ i_f=\bi_f\circ\pi_Q.\label{eq:ifb}\ee
In particular
\be \bi_f(\mathbf{1})=0.\nn\ee
\end{lemma}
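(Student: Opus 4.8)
The plan is to exploit that $i_f$ is a $Z_2$-graded antiderivation of $\T(V)$ which annihilates the generators $x\otimes x-Q(x)1$ of $I(Q)$; stability of the ideal then drops out of the signed Leibniz rule, and the passage to the quotient is purely formal. I expect no genuine difficulty here, the one delicate point being the treatment of the \emph{inhomogeneous} generator in the Leibniz step.

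First I would record the antiderivation identity
\[ i_f(u\otimes v)=i_f(u)\otimes v+(-1)^p\,u\otimes i_f(v),\qquad u\in\T^p(V),\; v\in\T(V), \]
which, although not listed among the properties of Proposition \ref{prop:if0}, follows from the recursion (\ref{eq:ift}) by a short induction on $p$: it is clear for $p=0$ since $i_f(1)=0$, and writing $u=x\otimes u'$ with $x\in V$, $u'\in\T^{p-1}(V)$ one expands $i_f\bigl(x\otimes(u'\otimes v)\bigr)$ by (\ref{eq:ift}) together with the inductive hypothesis, the extra factor $x$ promoting $(-1)^{p-1}$ to $(-1)^{p}$. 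By linearity in $v$ the formula then holds for arbitrary $v\in\T(V)$. Next, since $i_f(x)=f(x)$ by (\ref{eq:ifx1p}) and $i_f(1)=0$, the recursion gives $i_f(x\otimes x)=f(x)\,x-x\otimes i_f(x)=f(x)\,x-f(x)\,x=0$, whence $i_f\bigl(x\otimes x-Q(x)1\bigr)=0$ for every $x\in V$; the point worth flagging is that this vanishing holds for the full inhomogeneous generator, its degree-$0$ part being killed by $i_f(1)=0$.

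Writing $g_x:=x\otimes x-Q(x)1$, a typical element of $I(Q)$ is a finite sum of terms $a\otimes g_x\otimes b$ with $a=x_1\otimes\cdots\otimes x_p\in\T^p(V)$ and $b\in\T(V)$, so by linearity it suffices to treat one such term. Applying the Leibniz identity to detach $a$, and then to detach $g_x$ from $b$ — here both homogeneous parts of $g_x$ have even degree, so the sign is $+1$ and $i_f(g_x\otimes b)=i_f(g_x)\otimes b+g_x\otimes i_f(b)=g_x\otimes i_f(b)$ — I obtain
\[ i_f(a\otimes g_x\otimes b)=i_f(a)\otimes g_x\otimes b+(-1)^p\,a\otimes g_x\otimes i_f(b). \]
Both summands are again of the generating form, hence lie in $I(Q)$, giving $i_f(I(Q))\subseteq I(Q)$. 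The descent is then formal: stability makes $\bi_f\bigl(\pi_Q(u)\bigr):=\pi_Q\bigl(i_f(u)\bigr)$ a well-defined linear endomorphism of $\cl(Q)=\T(V)/I(Q)$ — which is exactly the intertwining relation $\pi_Q\circ i_f=\bi_f\circ\pi_Q$ — and $\bi_f(1)=\pi_Q(i_f(1))=\pi_Q(0)=0$.

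The single step demanding care is the handling of the inhomogeneous $g_x$ in the Leibniz computation: one must split $g_x$ into its degree-$2$ and degree-$0$ parts (both even) to be certain no sign intervenes when separating it from $b$. A reader who prefers not to invoke the Leibniz identity at all can instead prove $i_f(a\otimes g_x\otimes b)\in I(Q)$ directly by induction on $p=\deg a$ using only the recursion (\ref{eq:ift}); this is essentially the same computation.
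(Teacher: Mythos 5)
Your proof is correct and takes essentially the same route as the paper's: verify that $i_f$ annihilates the generators $x\otimes x-Q(x)1$ and pass formally to the quotient. You are in fact more complete than the paper, which only performs the generator computation and tacitly relies on the antiderivation property to conclude stability of the full two-sided ideal, whereas you explicitly state and prove the signed Leibniz rule $i_f(u\otimes v)=i_f(u)\otimes v+(-1)^p\,u\otimes i_f(v)$ and correctly handle the inhomogeneous (but even) generator when detaching it from $b$.
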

\begin{proof}
Let $\mathcal{S}$ denote the set of all $u\in I(Q)$ for which $i_f(u)\in I(Q).$ Using Eq. (\ref{eq:ift}) we instantly get that $\mathcal{S}$ is a left ideal. On the other hand, if $u=(x\otimes x -Q(X))\otimes v$, then a simple calculation shows that
\be
i_f(u)=(x\otimes x-Q(x))\otimes i_f(u),\ee
and therefore $\mathcal{S}$ contains the right ideal generated by $(x\otimes x-Q(x)).$ It follows that $I(Q)\subset \mathcal{S}$, and thus $I(Q)$ is stable under $i_f,$ so that $i_f$ defines the linear mapping $\bi_f$ on the quotient algebra $\cl(Q)$, with the property
\be \bi_f\circ\pi_Q=\pi_Q\circ i_f.\label{eq:ibf}\ee
In particular, by Eq. (\ref{eq:if1}), we have $\bi_f(\mathbf{1})=0.$
\end{proof}
The operators $e_x:\T(V)\rightarrow \T(V),\,x\in V,$ of left multiplication in the tensor algebra evidently map the ideal $I(Q)$ into itself, and therefore descend to the quotient $\cl(Q)=\T(V)/I(Q)$, and become the operators of left multiplication in $\cl(Q)$
We will denote these quotient operators $\bar{e}_x:$
\be \pi_Q\circ e_x=\bar{e}_x\circ \pi_Q.\label{eq:expi}\ee
It follows from the very definition that for $x,y\in V$
\be \bar{e}_x^2=Q(x),\ \bar{e}_x\bar{e}_y+\bar{e}_y\bar{e}_x= \Phi(x,y).\label{eq:bex}\ee
\begin{proposition}
 On $\cl(Q)$ we have\footnote{The right hand sides in Eqs. (\ref{eq:bex}),(\ref{eq:iepei}) are to be understood as multiplication by scalars in $\mbox{End}(\cl(Q))$. }
\begin{enumerate}
\item[{\rm (i)}] $ \bi_f(\mathbf{1})=0,\,(\mathbf{1}\in \cl(Q))$
\item[{\rm (ii)}] For all $x \in V,\,f,g\in V^*$ we have
\be \bi_f^2=0,\,\bi_f\bi_g+\bi_g\bi_f=0,\label{eq:bif2}\ee
\be \bi_f\bar{e}_x+\bar{e}_x\bi_f=f(x),\label{eq:iepei}\ee
or, explicitly: for all $x\in V$, $w\in \cl(Q),$ we have
\be \bi_f(x w)=f(x)\,w-x\, \bi_f(w).
    \label{eq:ifp}\ee
\item[{\rm (iii)}] For $x_1,\ldots,x_p\in V$  we have
\be \bi_f(x_1\cdots x_p)=\sum_{i=1}^p (-1)^{i-1}f(x_i)\,x_1\cdots \hat{x}_i\cdots x_p.\label{eq:ifx2p}\ee
\end{enumerate}
\label{prop:if2}\end{proposition}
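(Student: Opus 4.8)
The plan is to obtain every assertion by pushing the corresponding identity on the tensor algebra $\T(V)$ down to the quotient $\cl(Q)=\T(V)/I(Q)$ through the canonical surjection $\pi_Q$. The two facts that make this mechanical are the intertwining relations $\pi_Q\circ i_f=\bi_f\circ\pi_Q$ (Eq. \ref{eq:ibf}) and $\pi_Q\circ e_x=\bar{e}_x\circ\pi_Q$ (Eq. \ref{eq:expi}), together with the surjectivity of $\pi_Q$. The latter supplies the single cancellation principle used throughout: if $A,B\in\mbox{End}(\cl(Q))$ satisfy $A\circ\pi_Q=B\circ\pi_Q$, then $A=B$, since $\pi_Q$ is onto. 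I would state this once and then invoke it line by line.

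Part (i) is already established in Lemma \ref{lem:ibf}, so nothing new is needed there. For part (ii) I would treat each relation by the same template. From Proposition \ref{prop:if0}(iii),(iv) we have $i_f^2=0$ and $i_fi_g+i_gi_f=0$ on $\T(V)$; composing with $\pi_Q$ and moving it leftward through the intertwining twice gives $\bi_f^2\circ\pi_Q=\pi_Q\circ i_f^2=0$ and likewise for the anticommutator, whence $\bi_f^2=0$ and $\bi_f\bi_g+\bi_g\bi_f=0$ after cancelling $\pi_Q$. For the mixed relation I would start from Eq. \ref{eq:if2a}, namely $i_f\circ e_x+e_x\circ i_f=f(x)\cdot 1$ on $\T(V)$; intertwining both operators and using $\pi_Q\circ(f(x)\cdot\mbox{Id})=f(x)\cdot\pi_Q$ yields $(\bi_f\bar{e}_x+\bar{e}_x\bi_f)\circ\pi_Q=f(x)\cdot\pi_Q$, so $\bi_f\bar{e}_x+\bar{e}_x\bi_f=f(x)\cdot\mbox{Id}_{\cl(Q)}$ (the scalar acting as in the footnote). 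The explicit form \ref{eq:ifp} is simply this relation read on a general $w$, recalling $\bar{e}_x(w)=xw$; in fact it is exactly Eq. \ref{eq:ifp1} already derived in the proof of Corollary \ref{cor:inj}, now rewritten under the identification $x=\pi_Q(x)$.

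For part (iii) I would apply $\pi_Q$ to the closed formula \ref{eq:ifx1p} on $\T(V)$. Since $\pi_Q$ is an algebra homomorphism it sends each tensor monomial $x_1\otimes\cdots\otimes x_p$ to the Clifford product $x_1\cdots x_p$ and each truncated monomial to the corresponding truncated Clifford product; combined with $\pi_Q\circ i_f=\bi_f\circ\pi_Q$, this turns Eq. \ref{eq:ifx1p} verbatim into the stated expansion for $\bi_f(x_1\cdots x_p)$.

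There is no genuine obstacle here: the content of the proposition is carried entirely by the tensor-algebra computations already done in Proposition \ref{prop:if0}, and the remaining work is confined to the bookkeeping of the descent. The only point meriting care is the surjectivity cancellation, i.e. that an identity valid after precomposition with $\pi_Q$ holds on all of $\cl(Q)$; once this is isolated, each of the three parts reduces to a one-line application of the intertwining relations.
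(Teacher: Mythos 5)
Your proposal is correct and follows essentially the same route as the paper: the paper's proof likewise cites Lemma \ref{lem:ibf} for (i) and obtains (ii) and (iii) by applying $\pi_Q$ to the corresponding identities in Proposition \ref{prop:if0} and to Eq.~(\ref{eq:ifx1p}), using the intertwining relations and the fact that $\pi_Q$ is a surjective algebra homomorphism. Your only addition is to state the surjectivity-cancellation principle explicitly, which the paper leaves implicit.
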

\begin{proof}
We have already established (i). The formulas in (ii) follows immediately by applying $\pi_Q$ to Eqs. (iii) and (iv) in Proposition \ref{prop:if}, and  to Eq. (\ref{eq:ifx1p}), and taking into account the fact that $\pi_Q$ is an algebra homomorphism.\footnote{The formula (\ref{eq:ifx2p}) is usually proven for the $\BZ$-graded exterior algebra $\bigwedge(V)$ rather than for a general Clifford algebra $\cl(Q)$ which is only $\BZ_2$ graded. See e.g. \cite[Exerc. 5, p. 155]{Bourbaki2006}.}
\end{proof}
As it was in the case with $\bi_f$, we will denote by $\bi_x^F$ the antiderivation $\bi_f$ for $f(y)=F(x,y):$
\be \bi_x^F=\bi_f, \quad \mbox{for } f(y)=F(x,y),\,(x,y\in V).\nn\ee
The mapping $V\ni x\rightarrow \bi_x^F\in \mbox{End}\left(\cl(Q\right))$ is linear and has the property
$(\bi_x^F)^2=0.$ Therefore it extends to a unique algebra homomorphism $\bigwedge(V)\ni u\mapsto \bi_u^F\in\mbox{End}(\cl(Q)).$ In particular, if $Q=0$, we have the map $u\mapsto \bi_{u}^F,$ $\bigwedge (V)\rightarrow \mbox{End}\left(\bigwedge(V)\right)$, sometimes written using the contraction symbol $\rfloor$:
\be u\JJF v\doteq\bi_u^F(v).\nn\ee
\subsection{The mapping $\bl_F$}
The Proposition below encapsulates one of the most important properties of $\lambda_F$. For an alternative proof cf. \cite[Proposition 3, p. 143]{Bourbaki2006}.
\begin{proposition}
Let $Q$ and $Q'$ be two quadratic forms on $V$ such that $Q'(x)=Q(x)+F(x,x),$  where $F(x,y)$ is a bilinear form. The mapping $\lambda_F$ maps the ideal $I(Q')$ onto the ideal $I(Q)$, $\lambda_F(I(Q'))=I(Q),$ and therefore it defines a linear isomorphism, denoted $\bl_F$ of  $\cl(Q')$ onto $\cl(Q)$:
\be \pi_Q\circ \lambda_F=\bl_F\circ\pi_{Q'}.\label{eq:blam}\ee
\label{prop:ll}\end{proposition}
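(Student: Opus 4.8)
The plan is to establish the set-theoretic identity $\lambda_F(I(Q'))=I(Q)$ inside $\T(V)$ and then pass to the quotients. The one structural obstacle to keep in mind throughout is that $\lambda_F$ is merely a linear bijection, \emph{not} an algebra homomorphism, so it does not transport two-sided ideals automatically; the entire argument is arranged to bypass this by routing all multiplications through the genuine algebra homomorphism $\Lambda_F$.

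First I would check the behaviour on generators. By Eq. (\ref{eq:lb}) with $u=x$, together with $\lambda_F(x)=x$ (Eq. (\ref{eq:l1})) and $i_x^F(x)=F(x,x)=Q_F(x)$, one gets $\lambda_F(x\otimes x)=x\otimes x+Q_F(x)$. Since $\lambda_F(1)=1$ and $Q'(x)=Q(x)+Q_F(x)$, this yields
\be \lambda_F\big(x\otimes x-Q'(x)\cdot 1\big)=x\otimes x-Q(x)\cdot 1,\nn\ee
so the distinguished generators of $I(Q')$ are sent exactly to those of $I(Q)$.

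The key tool for the two-sided structure is the identity $\lambda_F(a\otimes w)=\Lambda_F(a)\big(\lambda_F(w)\big)$, valid for all $a,w\in\T(V)$, which follows from $\lambda_F(u)=\Lambda_F(u)(1)$ (Eq. (\ref{eq:lf1})) and the homomorphism property $\Lambda_F(a\otimes w)=\Lambda_F(a)\,\Lambda_F(w)$. I would then prove two facts. First, each $\Lambda_F(a)$ preserves $I(Q)$: for $a=x$ this holds because $\Lambda_F(x)=e_x+i_x^F$, where $e_x$ preserves the left ideal $I(Q)$ and $i_x^F=i_{f_x}$ preserves it by Lemma \ref{lem:ibf}; the homomorphism property of $\Lambda_F$ then extends this to every $a$. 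Second, for a generator $g=x\otimes x-Q'(x)\cdot 1$ and any $b\in\T(V)$, applying the identity twice and using $(i_x^F)^2=0$ (Proposition \ref{prop:if}(iii)) together with the anticommutator $e_x i_x^F+i_x^F e_x=Q_F(x)$ (Eq. (\ref{eq:if2a})) gives $\Lambda_F(x)^2=e_x^2+Q_F(x)\,\mbox{Id}$, whence
\be \lambda_F(g\otimes b)=\big(x\otimes x-Q(x)\cdot 1\big)\otimes\lambda_F(b)\in I(Q).\nn\ee
Combining these, a typical element $a\otimes g\otimes b$ of $I(Q')$ satisfies $\lambda_F(a\otimes g\otimes b)=\Lambda_F(a)\big(\lambda_F(g\otimes b)\big)\in\Lambda_F(a)(I(Q))\subseteq I(Q)$, and linearity gives $\lambda_F(I(Q'))\subseteq I(Q)$.

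For the reverse inclusion I would reapply this inclusion to the base form $Q'$ and the bilinear form $-F$: since $Q'+Q_{-F}=Q'-Q_F=Q$, it yields $\lambda_{-F}(I(Q))\subseteq I(Q')$, and because $\lambda_{-F}=\lambda_F^{-1}$ (Proposition \ref{prop:7}(vi)) this is precisely $I(Q)\subseteq\lambda_F(I(Q'))$. Hence $\lambda_F(I(Q'))=I(Q)$. Finally, being a linear bijection of $\T(V)$ that carries $I(Q')$ onto $I(Q)$, the map $\lambda_F$ descends to a linear isomorphism $\bl_F:\cl(Q')\to\cl(Q)$ satisfying $\pi_Q\circ\lambda_F=\bl_F\circ\pi_{Q'}$. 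I expect the second computation—correctly tracking how the scalar $Q'(x)$ is converted into $Q(x)$ through the anticommutator—to be the one place demanding care, while the conceptual obstacle (the non-homomorphism nature of $\lambda_F$) is neutralized cleanly by the factorization through $\Lambda_F$.
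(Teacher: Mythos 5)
Your proposal is correct and follows essentially the same route as the paper's proof: both send the generators $x\otimes x-Q'(x)$ to $x\otimes x-Q(x)$ (tensored on the right with $\lambda_F(b)$) via the same computation with $(i_x^F)^2=0$ and the anticommutator $e_xi_x^F+i_x^Fe_x=Q_F(x)$, both absorb left factors using stability of $I(Q)$ under $e_x$ and under $i_x^F$ (Lemma \ref{lem:ibf}), and both obtain the reverse inclusion by applying the forward inclusion to $-F$ and using $\lambda_{-F}=\lambda_F^{-1}$. The only difference is organizational: you factor explicitly through the homomorphism $\Lambda_F$ via the identity $\lambda_F(a\otimes w)=\Lambda_F(a)(\lambda_F(w))$, whereas the paper packages the identical content by introducing the subspace $\mF=\{u\in\T(V):\lambda_F(u)\in I(Q)\}$ and showing it is a left ideal containing the right ideal generated by the $x\otimes x-Q'(x)$.
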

\begin{proof}
To show that $\lambda_F(I(Q'))\subset I(Q)$ it is enough to show that $\lambda_F(u\otimes (x\otimes x-Q'(x))\otimes v)$ is in $I(Q)$ for all $u,v\in\T(V),\, x\in V.$ Using the fact that $\Lambda_F$ is an algebra homomorphism $\Lambda_F:\T(V)\rightarrow \End(\T(V))$ and Eq. (\ref{eq:L1}) we have
\be \lambda_F(u\otimes(x\otimes x-Q'(x))\otimes v)=\Lambda_F(u)\big(\Lambda_F(x\otimes x-Q'(x))(\lambda_F(v))\big).\ee
A straightforward calculation, using the properties $2$ and (iii) of Proposition \ref{prop:if}, gives
\be \Lambda_F(x\otimes x-Q'(x))=e_x\circ e_x-Q(x)\Id,\ee
where $\Id$ denotes the identity endomorphism of $\T(V).$ It immediately follows that $\Lambda_F(x\otimes x-Q'(x))(\lambda_F(u))=(x\otimes x-Q(x))\otimes\lambda_F(u)$ is in $I(Q).$ Now it suffices to prove that $\Lambda_F(u)$ maps $I(Q)$ into $I(Q).$ It suffices to do it when $u$ is a vector $y$ in $V,$ and $\Lambda_F(y)=e_y+i_y^F$. On one side, since $I(Q)$ is an ideal, it is evident that $e_y(I(Q))\subset I(Q)$; on the other side, we know from Lemma \ref{lem:ibf} that $i_y^F(I(Q))\subset I(Q)$. All this proves that $I(Q)\subset \lambda_F(I(Q')),$ and since $\lambda_{-F}=\lambda_F^{-1},$ it follows that $I(Q)\subset\lambda_F(I(Q')),$ and finally $\lambda_F(I(Q'))=I(Q).$

\end{proof}
\begin{proposition}
For all $x\in V$, $w\in \cl(Q')$ we have
\begin{eqnarray}
\bl_F(\mathbf{1})&=&\mathbf{1},\label{eq:lb1}\\
\bl_F(x)&=&x,\label{eq:lb2}\\
\bl_F(xw)&=&\bi_x^F(\bl_F(w))+x\bl_F(w),\label{eq:lb3}
\end{eqnarray}
where in Eq. (\ref{eq:lb3}) the multiplication $xw$ on the left is in the algebra $\cl(Q'),$ while the multiplication $x\bl_F(w)$ on the right is in the algebra $\cl(Q).$ Equations (\ref{eq:lb1}) and (\ref{eq:lb3}) define $\bl_F$ uniquely on $\cl(Q).$

For $p\geq2$ and $x_1,\ldots,x_p\in V$, we have\footnote{For the notation convention cf. {\bf Note} before the proof of Proposition \ref{prop:ln}.}
\be
\begin{split}
&\lambda_F(x_1\cdots x_p)=x_1\cdots x_p\\
&+\sum_{{2\leq 2k\leq p}}\,\sum_{\pi\in\mathcal{P}_{p,k}}\sgn(\pi) F(x_{\pi(1)},x_{\pi(2)})F(x_{\pi(3)},x_{\pi(4)})\cdots\\
&\cdots F(x_{\pi(2k-1)},x_{\pi(2k)})\,x_{\pi(2k+1)}\cdots x_{\pi(p)}.
\end{split}
\label{eq:lsq}\ee
 where the $\mathcal{P}_{p,k}$ is as in Definition \ref{def:perm}.

If $F,G$ are bilinear forms, if $Q''(x)=Q'(x)+G(x,x)$ and $Q'(x)=Q(x)+F(x,x),$ then
\be \bl_{F+G}=\bl_F\circ\bl_G.\label{eq:lbc}\ee:
\be
\begin{tikzcd}[column sep=1.em]
 & \cl(Q') \arrow{dr}{\bl_F} \\
\cl(Q'') \arrow{ur}{\bl_G} \arrow{rr}{\bl_{F+G}} && \cl(Q)
\end{tikzcd}
\label{eq:lbcd}\ee
We have
\be {(\bl_F)}^{-1}=\lambda_{(-F)}\label{eq:lfmin},\ee
therefore, in particular, $\bl_F$ is injective.

\label{prop:bl}\end{proposition}
\begin{proof}
Eqs. (\ref{eq:lb1}) and (\ref{eq:lb3}) follows immediately by applying $\pi_Q$ to both sides of Eqs. (\ref{eq:l1}) and  (\ref{eq:lb}) and making use of Eqs. (\ref{eq:ifp}) and (\ref{eq:blam}), with $w=\pi_Q(u).$ Eq. (\ref{eq:lbc}) follows directly from Eq. (\ref{eq:lbc0}) and the definition of the quotient mappings, as we have the general property of compositions of quotients of mappings. Eq. (\ref{eq:lsq}) follows directly from Eq. (\ref{eq:ls}) by applying $\pi_Q$. \end{proof}

\begin{cor}
If $x_1,\ldots,x_n$ are in $V$ and if $F(x_i,x_j)=0$ for $i<j$, then
\be \bl_F(x_1\cdots x_n)=x_1\cdots x_n,\nn\ee
where the product in the argument of $\bl_F$ is taken in $\cl(Q')$ and the product on the right hand side is in $\cl(Q).$
\label{cor:sqo}\end{cor}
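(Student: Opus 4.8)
The plan is to read the result off directly from the explicit expansion (\ref{eq:lsq}) established in Proposition \ref{prop:bl}. That formula writes $\bl_F(x_1\cdots x_n)$ as the leading term $x_1\cdots x_n$ plus a sum of correction terms, each of which carries a product of scalar factors of the form $F(x_{i_1},x_{j_1})\cdots F(x_{i_k},x_{j_k})$ with $k\geq 1$. The entire content of the proof is the observation that the index conventions attached to that sum force every one of these factors to vanish under the stated hypothesis, so that nothing survives except the leading term.

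Concretely, first I would recall that in (\ref{eq:lsq}) the inner sum ranges over index tuples satisfying $1\leq i_l<j_l\leq n$ for each $l=1,\ldots,k$ (together with $i_1<\cdots<i_k$, which we will not even need). Thus in every summand, and for every factor appearing in it, the first argument index of $F$ is strictly smaller than the second. By the hypothesis $F(x_i,x_j)=0$ for $i<j$, each factor $F(x_{i_l},x_{j_l})$ is therefore zero. Since every correction term with $k\geq 1$ contains at least one such factor, every correction term vanishes.

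Consequently the only surviving contribution in (\ref{eq:lsq}) is the leading term $x_1\cdots x_n$, computed in $\cl(Q)$, which yields the claimed identity $\bl_F(x_1\cdots x_n)=x_1\cdots x_n$. I do not expect any genuine obstacle here: the single point requiring care is to confirm that the ordering constraint $i_l<j_l$ built into the summation in (\ref{eq:lsq}) is \emph{exactly} the condition under which the hypothesis applies, so that no term can escape being annihilated. For completeness one should also note the degenerate cases: for $n<2$ the double sum is empty and the assertion reduces to $\bl_F(1)=1$ and $\bl_F(x)=x$, already recorded in (\ref{eq:lb1}) and (\ref{eq:lb2}).
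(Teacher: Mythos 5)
Your proof is correct and is exactly the paper's argument: the paper's own proof simply states that the corollary follows directly from Eq.~(\ref{eq:lsq}), which is precisely your observation that every correction term contains a factor $F(x_{i_l},x_{j_l})$ with $i_l<j_l$ and hence vanishes under the hypothesis. Your extra remarks on the index conventions and the degenerate cases $n<2$ are a careful spelling-out of the same reasoning, not a different route.
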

\begin{proof}
The statement follows directly from Eq. (\ref{eq:lsq}).
\end{proof}
\subsubsection{The mapping $\bL_F$}\label{sec:bLF}
The mapping $\bL_F$ defined below is a straightforward generalization of the Chevalley's construction, cf. \cite[p. 70]{chevalley1996}.

Let $Q$ be a quadratic form and $F$ a bilinear form on a vectors space $V$ over a field $\BK$. With $x\in V$ let $\bL_x^F\in \End\,(\cl(Q))$ be defined as
\be \bL_x^F = \bar{e}_x+\bi_x^F.\nn\ee
Using Eqs. (\ref{eq:bex}),(\ref{eq:bif2}),(\ref{eq:iepei}) we find that
\be (\bL_x^F)^2=Q'(x)\,\mathbf{1}_{\cl(Q)},\nn\ee
where $Q'(x)$ is the quadratic form
\be Q'(x)=Q(x)+F(x,x).\nn\ee
Therefore, by the universal property (see Proposition \ref{pro:up}),  $\bL_x^F$ extends to a unique algebra homomorphism, denoted $\bL_F,$
\be \bL_F:\cl(Q')\rightarrow \End(\cl(Q)).\nn\ee
Since every Clifford algebra has a natural $\BZ_2$-gradation , the algebraf $\End(\cl(Q))$ is also naturally $\BZ_2$-graded. In general, if $W_1,W_2$ are graded vector spaces, the gradation  of the space $\mbox{Hom }(W_1,W_2)$ is given by
\be \mbox{Hom}^k(W_1,W_2)=\{\phi\in \mbox{Hom }(W_1,W_2):\phi(W_1^i)\subseteq W_2^{i+k}\}.\nn\ee
Since $\bL_x^F$ is an odd element of $\End(\cl(Q)$, it follows that $\bL_F,$ is also a homomorphism of $\BZ_2$-graded algebras $\bL_F: \cl(Q')\rightarrow \End(\cl(Q)).$ It follows from the definition that
for $x_1,\ldots,x_p\in V$ we have
\be \bL_F(x_1\cdots x_p)=(\bar{e}_{x_1}+\bi^F_{x_1})\cdots (\bar{e}_{x_p}+\bi^F_{x_p}).\nn\ee
From Eqs. (\ref{eq:L1p}),(\ref{eq:ifb}),(\ref{eq:expi}) we have the relation between $\bL_F$ and $\Lambda_F$, namely, for all $F\in \mbox{Bil}(V), Q\in \mbox{Quad}(V),\, u,v\in \T(V)$ we have
\be \bL_F(\pi_{Q'}(u))(\pi_Q(v))=\pi_Q(\Lambda_F(u)(v)).\label{eq:LbL}\ee
From Eq. (\ref{eq:LFG}) we obtain now
\be \bL_F\left(\bL_G(u)(v)\right)(w)=\bL_{F+G}(u)\left(\bL_G(v)(w)\right),\label{eq:bLFG}\ee
for all $F,G\in \mbox{Bil}(V),Q\in\mbox{Quad}(V),u\in\cl(Q+Q_{F+G}),v\in\cl(Q+Q_F),$ $w\in\pi_Q(\T(\Rad(G)))=\cl(\Rad(G),Q).$

Let us recall that from Eqs. (\ref{eq:L1p}),(\ref{eq:lf1}) we have
\be \lambda_F(x_1\otimes\cdots\otimes x_p)=\Lambda_F(x_1\otimes\cdots\otimes x_p)(1)=(e_{x_1}+i^F_{x_1})\cdots (e_{x_p}+i^F_{x_p})(\mathbf{1}).\nn\ee
Applying $\pi_Q$ to both sides we obtain:
\be \bl_F(x_1\cdots x_p)=\bL_F(x_1\cdots x_p)(\mathbf{1}),\nn\ee
or \be \bl_F(u)=\bL_F(u)(\mathbf{1})\quad\mbox{for all}\quad u\in \cl(Q').\label{eq:lL1}\ee
We thus obtain the following commutative diagram
\[
\begin{tikzcd}[column sep=1.em]
 & \End(\cl(Q)) \arrow{dr}{\mbox{ev}(\cdot,1)} \\
\cl(Q') \arrow{ur}{\bL_F} \arrow{rr}{\bl_F} && \cl(Q)
\end{tikzcd}
\]
where $\mbox{ev}$ is the evaluation map. Since, (cf. Proposition \ref{prop:bl}) $\bl_F$ is injective, it follows that $\bL_F$ is injective as well, that is that $\bL_F$ is a graded algebra isomorphism from $\cl(Q)$ onto its image in $\End(\cl(Q')).$

\subsubsection{A comment on representations of Clifford algebras $\cl(V,Q)$ on $\bigwedge(V)$}\label{sec:reps}
Let $F$ be a bilinear form on $V,$ and let $Q=Q_F$, i.e. $Q(x)=F(x,x),\,x\in V.$ The mapping $\bL_F$, defined above in Section \ref{sec:bLF}, is then an algebra homomorphism from $\cl(V,Q)$ to $\End(\bigwedge(V))$. In other words $\bigwedge(V)$ becomes a Clifford module, and $u\mapsto \bL_F(u)$ becomes a (a faithful) representation of the algebra $\cl(V,Q)$ on $\bigwedge(V).$ Let us denote this representation $\rho_F$:
\be \rho_F(u)=\bL_F(u).\nn\ee

If $F'$ is another bilinear form with the property $Q_{F'}=Q_F=Q,$ then $F'=F+A,$ where $A$ is an alternating  form. In this case we have two representations, $\rho_F$ and $\rho_{F'}$  of the same algebra $\cl(V,Q)$ on $\bigwedge(V).$

Let us recall (see e.g. \cite[p. 49]{erdmann}) that two representations $\rho$ and $\rho'$ of the same algebra $\mathcal{A}$ on a vector space $\mathcal{W}$ are said to be equivalent if there is a vector space automorphism $\psi:\mathcal{W}\rightarrow \mathcal{W}$ such that
\be \rho'(a)=\psi\circ\rho(a)\circ\psi^{-1}\quad \mbox{for all}\quad a\in \mathcal{A}.\nn\ee
\begin{proposition}
With the assumptions as above, the representations $\rho_F$ and $\rho_{F'}$ of $\cl(V,Q)$ on $\bigwedge(V)$ are equivalent. Namely, we have
\be \rho_{F'}(a)=\lambda_A\circ\rho_{F}(a)\circ\lambda_A^{-1},\quad \mbox{for all}\quad  a\in \cl(V,Q).\nn\ee
\end{proposition}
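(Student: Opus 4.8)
The plan is to observe that both sides of the asserted identity are algebra homomorphisms $\cl(V,Q)\to\mbox{End}(\bigwedge(V))$ and then to check equality on the generating subspace $V\subset\cl(V,Q)$. On the left, $\rho_{F'}=\bL_{F'}$ is an algebra homomorphism by the construction in Section \ref{sec:bLF}. On the right, $a\mapsto\lambda_A\circ\rho_F(a)\circ\lambda_A^{-1}$ is the composite of the algebra homomorphism $\rho_F=\bL_F$ with conjugation by the fixed invertible operator $\lambda_A\in\mbox{End}(\bigwedge(V))$; since $A$ is alternating we have $Q_A=0$, so by Proposition \ref{prop:bl} the map $\lambda_A$ (that is, $\bl_A$) is a linear automorphism of $\bigwedge(V)=\cl(V,0)$, and conjugation by it is an inner automorphism of the associative algebra $\mbox{End}(\bigwedge(V))$. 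Hence both sides are homomorphisms, and since $\cl(V,Q)$ is generated by $V$ (Proposition \ref{pro:up}), it suffices to verify
\be \rho_{F'}(x)=\lambda_A\circ\rho_F(x)\circ\lambda_A^{-1},\quad x\in V.\nn\ee

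For this I would compute the conjugation separately on the two summands of $\rho_F(x)=\bar{e}_x+\bi_x^F$, where on $\bigwedge(V)$ the operator $\bar{e}_x$ is exterior (wedge) multiplication. First, applying Eq. (\ref{eq:lb3}) with the form $A$ and ambient quadratic form $0$ (so that source and target are both $\bigwedge(V)$) gives $\bl_A(x\wedge w)=\bi_x^A(\bl_A(w))+x\wedge\bl_A(w)$ for all $w\in\bigwedge(V)$, i.e. $\lambda_A\circ\bar{e}_x=(\bar{e}_x+\bi_x^A)\circ\lambda_A$, whence
\be \lambda_A\circ\bar{e}_x\circ\lambda_A^{-1}=\bar{e}_x+\bi_x^A.\nn\ee
Second, Eq. (\ref{eq:iii}) states $\lambda_A\circ i_f=i_f\circ\lambda_A$ on $\T(V)$; passing to the quotient $\bigwedge(V)=\T(V)/I(0)$ via Lemma \ref{lem:ibf} shows $\lambda_A\circ\bi_f=\bi_f\circ\lambda_A$, and taking $f=f_x$ with $f_x(y)=F(x,y)$ gives $\bi_x^F=\bi_{f_x}$ commuting with $\lambda_A$, so
\be \lambda_A\circ\bi_x^F\circ\lambda_A^{-1}=\bi_x^F.\nn\ee

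Adding the two contributions and using linearity of $F\mapsto\bi_x^F$ together with $F'=F+A$ yields
\be \lambda_A\circ\rho_F(x)\circ\lambda_A^{-1}=\bar{e}_x+\bi_x^A+\bi_x^F=\bar{e}_x+\bi_x^{F+A}=\bar{e}_x+\bi_x^{F'}=\rho_{F'}(x),\nn\ee
which is exactly what is needed on generators, completing the argument. The only real care required is bookkeeping: one must keep straight which algebra each product lives in when invoking Eq. (\ref{eq:lb3}) (the ambient form is $0$, so all multiplications there reduce to wedge products on $\bigwedge(V)$), and one must confirm that the commutation relation of Eq. (\ref{eq:iii}) genuinely descends from $\T(V)$ to $\bigwedge(V)$. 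I do not expect any substantive obstacle beyond this: the deformation of the wedge multiplication picked up under conjugation is precisely the extra contraction $\bi_x^A$, while $\bi_x^F$ is left untouched because $\lambda_A$ commutes with every $\bi_f$.
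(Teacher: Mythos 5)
Your proof is correct, but it takes a genuinely different route from the paper's. The paper obtains the intertwining relation in one line by specializing its composition identity (\ref{eq:bLFG}) (the quotient of the tensor-level Lemma with the radical hypothesis, proved by induction) at $F=A$, $G=F$, $w=1$, and then using $\bl_A(u)=\bL_A(u)(1)$ from Eq. (\ref{eq:lL1}); no appeal to generators is needed, and the equivalence appears as a direct corollary of the general composition law for the maps $\bL_F$. You instead argue on generators: both sides are unital algebra homomorphisms (conjugation by the invertible $\bl_A$ being inner), so it suffices to check $x\in V$, where you split $\rho_F(x)=\bar{e}_x+\bi_x^F$ and conjugate each summand separately — Eq. (\ref{eq:lb3}) with ambient form $0$ gives $\bl_A\circ\bar{e}_x\circ\bl_A^{-1}=\bar{e}_x+\bi_x^A$, while Eq. (\ref{eq:iii}), descended to $\bigwedge(V)$ via Lemma \ref{lem:ibf} and Proposition \ref{prop:ll} (with $Q=Q'=0$ since $Q_A=0$), gives $\bl_A\circ\bi_x^F\circ\bl_A^{-1}=\bi_x^F$; linearity of $f\mapsto \bi_f$ then assembles $\bar{e}_x+\bi_x^{F+A}=\rho_{F'}(x)$. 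What your route buys is self-containedness and transparency: it bypasses the induction-heavy identity (\ref{eq:LFG})/(\ref{eq:bLFG}) entirely, and your first conjugation formula is precisely the characteristic-free, purely algebraic version of Eq. (\ref{eq:exe}), which the paper only derives later in Sec. \ref{sec:aaf} via the exponential $e^{\bi_{A^*}}$ in characteristic zero — so your argument in effect explains \emph{why} the twist by $A$ deforms $\bar{e}_x$ by exactly $\bi_x^A$ and leaves $\bi_x^F$ fixed. What the paper's route buys is economy and generality: once (\ref{eq:bLFG}) is available, the proposition is an instant substitution, and the same identity simultaneously governs compositions for arbitrary pairs $F,G$, not just the alternating case. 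Two cosmetic points: generation of $\cl(V,Q)$ by $V$ is stated after Definition \ref{def:cli} rather than in Proposition \ref{pro:up} (which is the universal property), and when you invoke Eq. (\ref{eq:iii}) you should cite Proposition \ref{prop:ll} explicitly for the fact that $\lambda_A$ preserves $I(0)$, so that the commutation relation genuinely passes to the quotient — exactly the bookkeeping you flagged.
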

\begin{proof}
Indeed, replacing $(F,G,w)$ with $(A,F,\mathbf{1})$ in Eq. (\ref{eq:bLFG}), and  using Eq. (\ref{eq:lL1}) we obtain
\be \bl_A\left(\bL_F(u)(v)\right)=\bL_{F'}(u)\left(\bl_A(v)\right),\nn\ee
or
\be \bl_A\circ\rho_F(u)=\rho_{F'}(u)\circ\bl_A.\nn\ee
\end{proof}
\subsection{Automorphisms and deformations in the bundle of Clifford  algebras}\label{sec:aut}
We have arrived at the following picture: We have an action, let us denote it by  $\tilde{\lambda}$, of the additive group of bilinear forms $\bv(V)$  on the manifold of quadratic forms $\qv(V)$,
the stability subgroup being $\av(V)$ - the additive group of alternating  forms:
\begin{eqnarray}
\tilde{\lambda}:\bv(V)\times\qv(V)&\rightarrow& \qv(V)\notag\\
\tilde{\lambda}(F,Q)&=&Q',\\ Q'(x)&=&Q(x)+F(x,x)\notag.
\end{eqnarray}
The group $\bv(V)$ acts on the basis of the $\BZ_2$-graded vector bundle $\cl(V)$ whose fibers are Clifford algebras $\cl(Q).$ And we know that this action admits what
is called a {\em lifting}, and we denote it with the letter $\tilde{\lambda}$, to the bundle $\cl(V):$
\be \tilde{\lambda}(F,u)=\bl_{-F}(u),\, u\in \cl(Q),\nn\ee
where $\bl_F$ have been defined in Proposition \ref{prop:ll}.\footnote{By definition the mapping $\bl_F$ is from $\cl(Q')$ to $\cl(Q)$. To map $\cl(Q)$ to $\cl(Q')$, where $Q'=Q+Q_F,$ we need to take the inverse map $\bl_{-F}=\bl_F^{-1}.$} Now $\tilde{\lambda}(F,\cl(Q))=\cl(Q').$ Thus fibers are mapped onto fibers by linear isomorphisms - see Fig. \ref{fig:bca}. For $F\in\av(V)$ we have $Q'=Q$, and so each fiber $\cl(Q)$ is mapped linearly onto itself. The linear morphisms above are also preserving the natural $\BZ_2$-gradations of the Clifford algebras..
\begin{figure}
\begin{tikzpicture}[scale=1.5]
\draw (1,2) -- (7,2) -- (7,5) -- (1,5) -- (1,2);
\draw (1,1) -- (7,1);
\draw (2,2) -- (2,5);
\draw (2,0.9) -- (2,1.1);
\draw (5,0.9) -- (5,1.1);
\node at (6.5,4.7) {\large{$\cl(V)$}};
\node at (8,1) {\large{$\qv(V)$}};
\node at (2,0.5) {\large{$Q$}};
\node at (5,0.5) {\large{$Q'$}};
\draw (5,2) -- (5,5);
\node at (5.5,3.5) {\large{$\cl(Q')$}};
\node at (2.5,3.5) {\large{$\cl(Q)$}};
\draw [->] (2.5,0.5) -- (4.5,0.5);
\node at (3.5,0.75) {\large{$\tilde{\lambda}$}};
\draw [->] (2,2.5) -- (5,3);
\node at (3.5,3) {\large{$\bl$}};
\node at (3.5,4.5) {\large{$\bl$}};
\draw [->] (2,4) -- (5,4.5);

\end{tikzpicture}
\caption{Every bilinear form $F\in \bv(V)$ defines an automorphism of the $\BZ_2$-graded vector bundle of Clifford algebras mapping linearly fibers onto fibers. Alternating  forms in $\av(V)\subset\bv(V)$ define vertical automorphisms - they do not move points on the base and map every fiber into itself. Such automorphisms are also called {\em gauge transformations} }
\label{fig:bca}\label{fig:1}\end{figure}
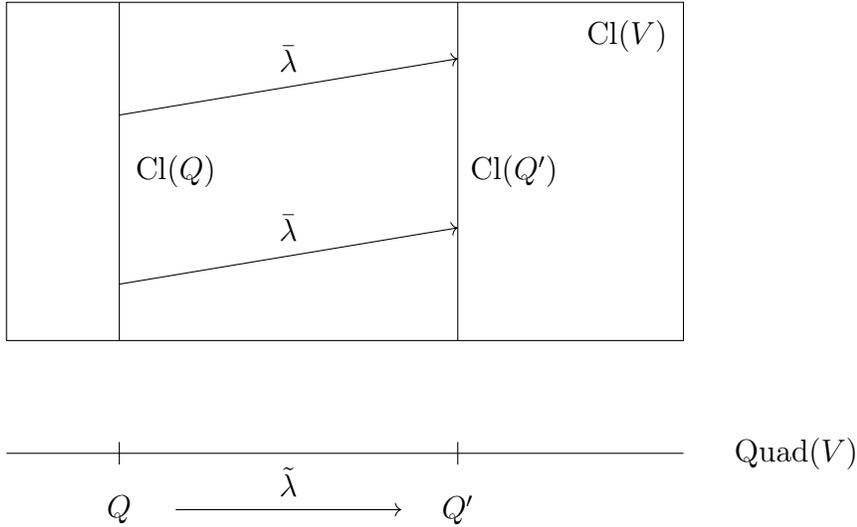

In characteristic $\neq 2$ the $\BZ_2$-graded vector bundle $\cl(V)$ admits a natural parallelization. If $Q$ and $Q'$ are any two quadratic forms, then there is a unique symmetric bilinear form F, such that $Q'=Q+Q_F,$ where $Q_F$ is given by Definition \ref{def:qb}, namely $F=\frac12 (\Phi'-\Phi),$ where $\Phi, \Phi'$ are given by Eq. (\ref{eq:beta}). In that case $\bl_F$ provides a natural $\BZ_2$-graded vector space isomorphism between $\cl(V,Q')$ and $\cl(V,Q).$ In characteristic $2$ the bundle $\cl(V)$ is also parallelizable, and the parallelization can be achieved using a non-symmetric bilinear form $F$ defined, for instance, by the construction in Section \ref{sec:bf2}. In that case the parallelization depends on the choice of a basis in $V.$

Given two quadratic forms $Q$ and $Q'$, such that $Q'=Q+Q_F$, we can use the mapping $\bl_F$ to realize the Clifford algebra $\cl(Q')$ as a deformation of the Clifford algebra product in $\cl(Q)$ as follows. For $u,v\in \cl(Q)$ let $uv$ stand for the product of $u$ and $v$ in $\cl(Q).$ Then $\bl_{-F}(u)$ and $\bl_{-F}(v)$ are in $\cl(Q')$. We multiply them in $\cl(Q')$ and transform their product back to $\cl(Q)$ using $\bl_F.$ This way we obtain the $\cl(Q'),$  product, with $Q'=Q+Q_F$, realized within $\cl(Q)$:
\be \BFprod{u}{v}=\bl_F\left(\bl_{-F}(u)\,\bl_{-F}(v)\right).\label{eq:uvf1}\ee
Since $V$ generates any Clifford algebra $\cl(V,Q)$, in particular $\cl(Q'),$ it is enough to have the formula (\ref{eq:uvf1}) for $u=x\in V.$ In that case we can use the fact that $\bl_{-F}(x)=x$ - see Eqs. (\ref{eq:lb2}),(\ref{eq:lb3}) - to obtain
\begin{eqnarray}
\BFprod{x}{v}&=&\bl_F\left(\bl_{-F}(x)\,\bl_{-F}(v)\right)\notag\\
&=& \bl_F\left(x\,\bl_{-F}(v)\right)=x\,\bl_F\left(\bl_{-F}(v)\right)+\bi_x^F\left(\bl_F(\bl_{-F}(v))\right)\notag\\
&=&xv+\bi_x^F(v)\label{eq:xFv}.
\end{eqnarray}

\subsubsection{The action of alternating  forms}
In this section we will assume that $V$ is a finite-dimensional vector space over an arbitrary field $\BK$. Later on in this section, when discussing the exponential, we will additionally assume that $\BK$ is of characteristic zero.

There is a natural linear mapping from $\bigwedge^2(V^*)$ to $\av(V)$ defined by
\be \langle f\wedge g,x\wedge y\rangle=g(x)f(y)-f(x)g(y),\label{eq:fwg}\ee
for $f,g\in V^*$ and $x,y\in V$. Since we have assumed that $V$ is finite-dimensional, the mapping above is a bijection - cf. e.g \cite[p. 591, Proposition 7]{Bourbaki1998}. Every element $A^*\in \bigwedge^2(V^*)$ defines an alternating  form $A\in\av(V)$:
\be A(x,y)=\langle A^*,x\wedge y\rangle.\label{eq:axy}\ee
In particular, for $A^*=f\wedge g$ we have
\be A(x,y)=g(x)f(y)-f(x)g(y).\label{eq:Axy}\ee
From Section \ref{sec:ibf} we know that we have a linear map $V^*\ni f\mapsto \bi_f\in \End(\cl(V,Q))$ that associates to every linear form $f\in V^*$ the antiderivation $\bi_f\in \End(\cl(V,Q))$ with the property $\bi_f^2=0$. From the universal property of the exterior algebra it then follows that this mapping has a unique extension to an algebra homomorphism
\be \bigwedge(V^*)\ni u^*\mapsto i_{u^*}\in \End(\cl(V,Q)),\nn\ee
thus making $\cl(V,Q)$ a left $\bigwedge(V^*)$ module. In particular, for $f,g\in V^*$ we have
\be \bi_{f\wedge g}=\bi_f\circ\bi_g.\label{eq:ibfg}\ee
One often writes $u^*\lrcorner v$ for $u^*\in \bigwedge(V^*)$, $v\in \cl(V,Q),$ and calls it the interior product of $u^*$ and $v$ - cf. e.g. \cite[Ch. 4.4]{Helmstetter2008}. We will be interested in the action of $i_{A^*}=A^*\lrcorner \,\cdot\,$ for $A^*\in \bigwedge^2(V^*).$ It is evident that $\bi_{A^*}(\mathbf{1}_{\cl(V,Q)})=0.$ On the other hand, with $\bi_x^F$ defined  as in Section \ref{sec:ibf} we have:
\begin{proposition}
For all $x\in V$, $u\in\cl(V,Q)$ we have
\be i_{A^*}(xu)=x\,\bi_{A^*}(u)+\bi_x^A(u).\label{eq:prA}\ee
\end{proposition}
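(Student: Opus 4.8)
The plan is to reduce the identity to the case of a decomposable two-form $A^*=f\w g$ and then to verify it by a short direct computation that uses only the antiderivation rule (\ref{eq:ifp}). First I would observe that both sides of (\ref{eq:prA}) depend linearly on $A^*$: the left-hand side because the map $u^*\mapsto \bi_{u^*}$ is linear (it is the algebra homomorphism $\bigwedge(V^*)\to\mbox{End}(\cl(V,Q))$), and the right-hand side because $A\mapsto \bi_x^A$ is linear while $A^*\mapsto A$ is the linear bijection (\ref{eq:axy}). Since every element of $\bigwedge^2(V^*)$ is a finite sum of decomposable elements $f\w g$ with $f,g\in V^*$, it therefore suffices to establish (\ref{eq:prA}) in that case.

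For $A^*=f\w g$ the two ingredients I would need are already available. By (\ref{eq:ibfg}) we have $\bi_{A^*}=\bi_f\circ\bi_g$, and by (\ref{eq:Axy}) the associated alternating form is $A(x,y)=g(x)f(y)-f(x)g(y)$. Hence the linear form $y\mapsto A(x,y)$ equals $g(x)\,f-f(x)\,g$, and by linearity of $f\mapsto\bi_f$ this yields the key identification
\be \bi_x^A=g(x)\,\bi_f-f(x)\,\bi_g.\nn\ee
This is the only genuinely ``form-theoretic'' step; everything else is the anticommutation bookkeeping encoded in (\ref{eq:ifp}).

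It then remains to compute $\bi_f\bi_g(xu)$ by applying (\ref{eq:ifp}) twice. First $\bi_g(xu)=g(x)\,u-x\,\bi_g(u)$; applying $\bi_f$ and using (\ref{eq:ifp}) once more on the term $\bi_f(x\,\bi_g(u))=f(x)\,\bi_g(u)-x\,\bi_f\bi_g(u)$ gives
\be \bi_{A^*}(xu)=\bi_f\bi_g(xu)=x\,\bi_f\bi_g(u)+g(x)\,\bi_f(u)-f(x)\,\bi_g(u).\nn\ee
Recognizing the first term as $x\,\bi_{A^*}(u)$ and the last two, via the boxed identification above, as $\bi_x^A(u)$, I obtain exactly (\ref{eq:prA}).

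I do not expect a real obstacle in this argument: the substantive content is merely the two-fold use of the antiderivation rule (\ref{eq:ifp}) together with the observation that $A(x,\cdot)=g(x)f-f(x)g$. The one point deserving an explicit line of justification is the reduction to decomposable $A^*$, i.e. the linearity of both sides in $A^*$; once that is noted, the decomposable case is the elementary calculation above. (Conceptually, the result simply records that $\bi_{A^*}$ acts on $\cl(V,Q)$ as a degree $-2$ operator whose ``graded commutator'' with left multiplication by $x\in V$ reproduces the first-order antiderivation $\bi_x^A$.)
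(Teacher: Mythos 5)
Your proposal is correct and follows essentially the same route as the paper's own proof: reduction to a decomposable $A^*=f\wedge g$ via linearity, the identification $\bi_x^A=g(x)\,\bi_f-f(x)\,\bi_g$ from $A(x,\cdot)=g(x)f-f(x)g$, and a two-fold application of the antiderivation rule (\ref{eq:ifp}) to $\bi_f\bi_g(xu)$. The only (harmless) difference is that you state the linearity reduction up front, whereas the paper invokes it at the end.
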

\begin{proof}
Let us first consider the case of $A^*$ being a simple two-form $A^*=f\wedge g$, $ f,g\in V^*,$ in which case we can use Eq. (\ref{eq:fwg}) relating $A^*\in \bigwedge^2(V^*)$  to $A\in \av(V).$ In this case we have
\begin{eqnarray} \bi_{A^*}(xu)&=&\bi_f(\bi_g(xu))=\bi_f(g(x)u-x\bi_g(u))\nn\\
&=&g(x)\bi_f(u)-f(x)\bi_g(u)+x\bi_f(\bi_g(u))\notag\\
&=&g(x)\bi_f(u)-f(x)\bi_g(u)+x\,\bi_{A^*}(u).\nn\end{eqnarray}
We will now show that
\be g(x)\bi_f(u)-f(x)\bi_g(u)=\bi_x^A(u).\label{eq:gxfu}\ee
To show (\ref{eq:gxfu}), let us fix $x$ and write (\ref{eq:Axy}) as
$$ A(x,y)=g(x)f(y)-f(x)g(y)=h(y),\quad\mbox{where}\quad h=g(x)f-f(x)g.$$
Consequently
$\bi_x^A=\bi_h=g(x)\bi_f-f(x)\bi_g,$ as in (\ref{eq:gxfu}).

Thus, for simple $A^*=f\wedge g$ we have obtained
\be \bi_{A^*}(xu)=\bi_x^A(u)+x\bi_{A^*}(u).\label{eq:iaxu}\ee
But then Eq. (\ref{eq:iaxu}) extends by linearity to the whole $\bigwedge^2(V^*).$
\end{proof}

It will be convenient to write Eq. (\ref{eq:prA}) as a commutator
\be [\bi_{A^*},\bar{e}_x]=\bi_x^A,\label{eq:cie}\ee
where $\bar{e}_x$ is the operator of left multiplication by $x\in V$ - c.f Eq. (\ref{eq:expi}).

\noindent The following Corollary is a stripped down and application oriented version of a much more sophisticated and general result to be found in Ref. \cite[Theorem 4.7.13, p. 208]{Helmstetter2008}. Here we use the ordinary exponential of a linear operator, as it is done in physics oriented formulations. Our version is simply aimed to replace the `Wick isomorphism' of Ref. \cite{ablamowiczfauser}.
\begin{note}\label{n:wick}In Ref. \cite[Sec. Wick isomorphism]{ablamowiczfauser}  the authors consider two bilinear forms $g$ and $F$ related by $F=g+A,$ where $g$ is symmetric, while $A$ is alternating  and realized as $A(x,y) = \bi_F^g (x\wedge y)$,   where $x,y\in V,$ $F\in\bigwedge^2 (V),$ and, for $u,v\in V,$ $\bi_{u\w v}^g=\bi_u^g\circ\bi_v^g.$
The Clifford algebras $\cl(V,g)$ and $\cl(V,F)$ are realized as $\bigwedge(V)$ equipped with the algebra products defined by the general Chevalley's formula (\ref{eq:xFv}). The authors then state that `The Wick isomorphism is now given as $\cl(F,V)=\phi^{-1}(\cl(g,V))
= \mathrm{e}_\wedge^{-F}\wedge \cl(Q,V)\wedge \mathrm{e}_\wedge^F,$' where $\phi$ is not defined apart of this one formula. I have doubts about the validity of this construction. It is clear that $\phi^{-1}\wedge z\wedge \phi=z$ for all $z\in\bigwedge(V)$ if $\phi$ is an invertible element of $\bigwedge^+(V)$.
 Moreover, concerning the following `contraction', (as in the formula (3.1),(ii) of the paper), $z\mapsto (x\lrcorner_gF)\wedge z$ is an operator of degree $+1$ which cannot be equal to an operator of degree $-1$.
\end{note}
\begin{cor}
Assuming $\BK$ to be of characteristic zero, for any $u\in\cl(Q)$ and any alternating  form $A$ we have
\be \mathrm{e}^{\bi_{A^*}}(u)=\bl_A(u),\label{eq:blexp}\ee
where $\bl_A$ is given by  Eqs.  (\ref{eq:lb1}),(\ref{eq:lb2}),(\ref{eq:lb3}).
\end{cor}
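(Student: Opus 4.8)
The plan is to show that the operator $e^{\bi_{A^*}}$ satisfies exactly the two properties (\ref{eq:lb1}) and (\ref{eq:lb3}) that, by Proposition \ref{prop:bl}, characterize $\bl_A$ uniquely on $\cl(Q)$. First I would check that $e^{\bi_{A^*}}$ is well defined when applied to a fixed element: on a simple $A^*=f\wedge g$ we have $\bi_{A^*}=\bi_f\circ\bi_g$ by (\ref{eq:ibfg}), so $\bi_{A^*}$ lowers the filtration degree by two and its powers annihilate any fixed $u\in\cl(Q)$ after finitely many steps; hence the exponential series terminates. This is where the characteristic-zero hypothesis enters, to make sense of the coefficients $1/k!$. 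The identity $e^{\bi_{A^*}}(1)=1$ is then immediate from $\bi_{A^*}(1)=0$, matching (\ref{eq:lb1}).

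The substance of the argument is reproducing the recursion (\ref{eq:lb3}). Writing $D=\bi_{A^*}$, $E=\bar e_x$, and $I=\bi_x^A$, I would invoke two commutator identities, $[D,E]=I$ and $[D,I]=0$. The first is exactly Eq.~(\ref{eq:cie}). The second I would derive from the anticommutation of first-order contractions (\ref{eq:bif2}): since $\bi_x^A=\bi_h$ for the linear form $h(y)=A(x,y)$ and, on a simple $A^*=f\wedge g$, we have $D=\bi_f\bi_g$, one computes $\bi_f\bi_g\bi_h=-\bi_f\bi_h\bi_g=\bi_h\bi_f\bi_g$, so that $[D,I]=0$; this extends by linearity to arbitrary $A^*\in\bigwedge^2(V^*)$.

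With these two commutators in hand, the Hadamard expansion $e^{D}Ee^{-D}=\sum_{n\ge 0}\frac{1}{n!}(\mathrm{ad}_D)^n(E)$ truncates after the linear term, because $(\mathrm{ad}_D)^2(E)=[D,I]=0$. Hence $e^{D}Ee^{-D}=E+I=\bar e_x+\bi_x^A$, and therefore $e^{D}\bar e_x=(\bar e_x+\bi_x^A)\,e^{D}$. Applying both sides to $w\in\cl(Q)$ gives
\be e^{\bi_{A^*}}(xw)=x\,e^{\bi_{A^*}}(w)+\bi_x^A\bigl(e^{\bi_{A^*}}(w)\bigr),\nn\ee
which is precisely the recursion (\ref{eq:lb3}) satisfied by $\bl_A$. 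Since $V$ generates $\cl(Q)$, properties (\ref{eq:lb1}) and (\ref{eq:lb3}) determine an operator uniquely, so $e^{\bi_{A^*}}=\bl_A$, as claimed.

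The step I expect to require the most care is the vanishing of the nested commutator $[\bi_{A^*},\bi_x^A]=0$: it is what makes the adjoint series terminate and reduces the whole exponential to the first-order deformation encoded in the defining recursion for $\bl_A$. Everything else is bookkeeping on the generators, justified by the uniqueness clause of Proposition \ref{prop:bl} and by the termination of the series guaranteed by the filtration-lowering property of $\bi_{A^*}$.
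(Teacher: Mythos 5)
Your proposal is correct and follows essentially the same route as the paper: establishing $e^{\bi_{A^*}}\bar{e}_x e^{-\bi_{A^*}}=\bar{e}_x+\bi_x^A$ from the commutator identity (\ref{eq:cie}) together with $[\bi_{A^*},\bi_x^A]=0$, and then invoking the uniqueness of the operator determined by (\ref{eq:lb1}) and the recursion (\ref{eq:lb3}). You merely make explicit two points the paper relegates to a footnote or leaves implicit --- the termination of the exponential series on each fixed element and the derivation of $[\bi_{A^*},\bi_x^A]=0$ from the anticommutation relations (\ref{eq:bif2}) --- which is sound and even clarifies the original argument.
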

\begin{proof}
When the field $\BK$ contains the field of rational numbers, the exponential $\exp(i_{A^*})\in \End(\cl(Q))$ is well defined. 
It follows then from Eqs. (\ref{eq:bif2}),(\ref{eq:ibfg}) that $[\,\bi_{f\wedge g},\bi_h]=0$ for all $f,g,h\in V^*,$ and therefore
$[\,\bi_{A^*},\bi_h]=0$ for all $A^*\in\bigwedge^2(V^*),h\in V^*$. In particular $[\,\bi_{A^*},\bi_x^A]=0$ for all $A^*\in\bigwedge^2(V^*),x\in V$. Therefore
\be
\begin{split} &\mathrm{e}^{\bi_{A^*}}\,\bar{e}_x \mathrm{e}^{-\bi_{A^*}}\\
&=\bar{e}_x+[\,\bi_{A^*},\bar{e}_x]+\frac12[\,\bi_{A^*},[\,\bi_{A^*},\bar{e}_x]]+
\frac{1}{3!}[\,\bi_{A^*},[\,\bi_{A^*},[\,\bi_{A^*},\bar{e}_x]]]+\cdots\\
&=\bar{e}_x+\bi^A_x+\frac12[\,\bi_{A^*},\bi_x^A]+
\frac{1}{3!}[\,\bi_{A^*},[\,\bi_{A^*},\bi_x^A]]+\cdots\\
&=\bar{e}_x+\bi_x^A,\label{eq:exe}\end{split}\ee
which we can rewrite as\footnote{Note that the exponential series in Eq. (\ref{eq:exe}) de facto terminate at $\mathbf{1}+\bi_{A^*}$, because all higher commutators vanish. Therefore, with such an understanding, our formulae also make sense in characteristic other than $0$. In fact, we could also express $\bl_F$ in terms of divided power exponential, as we have done it for $\lambda_F$ in Sec. \ref{sec:lexp}, rewriting those formulas using the appropriately changed notation.}
\be \mathrm{e}^{\bi_{A^*}}(xu)=x\left(\mathrm{e}^{\bi_{A^*}}(u)\right)+\bi_x^A\left(\mathrm{e}^{\bi_{A^*}}(u)\right).\label{eq:eias}\ee
Since $\bi_{A^*}\left(\mathbf{1}_{\cl(Q)}\right)=0,$ we have $\mathrm{e}^{\bi_{A^*}}\left(\mathbf{1}_{\cl(Q)}\right)=\mathbf{1}_{\cl(Q)}$, which together with Eq. (\ref{eq:eias}) shows that $\exp(\bi_{A^*})$ satisfies Eqs.  (\ref{eq:lb1}),(\ref{eq:lb2}), and therefore $\exp(\bi_{A^*})=\bl_A.$
\end{proof}
With $F=A$, $x\in V$, $v\in\cl(V,Q)$, we can now rewrite the second line in Eq. (\ref{eq:xFv}) as
\be \Tprod{x}{v}{A}=\mathrm{e}^{\bi_{A^*}}\left(x\,\mathrm{e}^{-\bi_{A^*}}(v)\right).\label{eq:xAv}\ee
\label{sec:aaf}\subsection{Symbol map and quantization map}
Let $V$ be a vector space over a field of characteristic $\neq 2$, $Q$ a quadratic form, and $\Phi$ the associated bilinear form. Let $F(x,y)=\frac{1}{2}\Phi(x,y),$ so that $Q(x)=F(x,x).$  We denote by $\sigma_Q$ the linear map
\be \sigma_Q:\cl(Q)\rightarrow \bigwedge(V)\nn\ee
defined as $\sigma_Q=\bl_F,$ and by $q_Q:\bigwedge(V)\rightarrow \cl(Q)$ the inverse of $\sigma_Q$:
\be q_Q=\sigma_Q^{-1}=\lambda_{-F}.\nn\ee

Following Ref. \cite[Ch. 2.2.5, p. 32]{meinrenken2013} we call $\sigma_Q$ the symbol map, and $q_Q$ the quantization map.
The following property of the symbol map follows as an immediate application of Corollary \ref{cor:sqo}.

\begin{cor}
If $V$ is vector space over a field of characteristic $\neq 2,$ if $x_1,\ldots ,x_n$ are in $V$, and if they are pairwise orthogonal, i.e. if $F(x_i,x_j)=0$ for $i\neq j$ then
\be \sigma_Q(x_1\cdots x_n)=x_1\wedge \cdots \wedge x_n.\label{eq:3c1}\ee
\label{cor:3c1}\end{cor}
We also notice the following important property of the quantization map:\footnote{Corollary \ref{cor:3c1} and Proposition \ref{prop:3c} are stated as an exercise in Ref. \cite[Exercise 3c, p. 154]{bourbaki1959}.}
\begin{proposition}
If $V$ is vector space over a field of characteristic $\neq 2,$ then for any sequence $y_1,\ldots,y_n\in V$ we have\be n!q_Q(y_1\wedge y_2\wedge \cdots \wedge y_n)=\sum_\sigma (-1)^\sigma\, y_{\sigma(1)}y_{\sigma(2)}\cdots y_{\sigma(n)}.\label{eq:nfac}\ee
\label{prop:3c}
\end{proposition}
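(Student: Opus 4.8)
The plan is to show that both sides of (\ref{eq:nfac}), viewed as functions of the tuple $(y_1,\ldots,y_n)\in V^n$, are \emph{alternating} $n$-linear maps into $\cl(Q)$, so that each factors through the canonical map $V^n\to\bigwedge^n(V)$; it then suffices to verify the identity on the decomposable generators $e_{i_1}\wedge\cdots\wedge e_{i_n}$ of a conveniently chosen basis of $V$. For the left-hand side this is immediate, since $q_Q$ is linear and $(y_1,\ldots,y_n)\mapsto y_1\wedge\cdots\wedge y_n$ is alternating. For the right-hand side, multilinearity is clear because each $y_i$ occurs exactly once and linearly in every Clifford monomial $y_{\sigma(1)}\cdots y_{\sigma(n)}$, while antisymmetry follows because transposing two arguments amounts to precomposing the summation index $\sigma$ with a transposition, a sign-reversing bijection of $S_n$; in characteristic $\neq 2$ an antisymmetric multilinear map is automatically alternating.

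First I would reduce to an orthogonal basis. Since $\BK$ has characteristic $\neq 2$, the form $F=\frac{1}{2}\Phi$ is symmetric, and every finite-dimensional subspace admits an orthogonal basis for $F$ (diagonalization of symmetric bilinear forms). Because any fixed tuple $y_1,\ldots,y_n$ spans a finite-dimensional subspace $W$, and because both $q_Q$ and the Clifford multiplication of vectors of $W$ are computed inside the subalgebra generated by $W$ — on which $\lambda_{-F}$ restricts to $\lambda_{-F|_W}$, compatibly with the inclusion $\cl(W,Q|_W)\hookrightarrow\cl(Q)$ — I may assume $V=W$ and fix a basis $\{e_i\}$ that is orthogonal for $F$. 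Then each generator $e_{i_1}\wedge\cdots\wedge e_{i_n}$ with $i_1<\cdots<i_n$ has pairwise orthogonal factors, $F(e_{i_a},e_{i_b})=0$ for $a\neq b$.

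On such a generator the left-hand side is evaluated by Corollary \ref{cor:3c1}: applying $q_Q=\sigma_Q^{-1}$ to $\sigma_Q(e_{i_1}\cdots e_{i_n})=e_{i_1}\wedge\cdots\wedge e_{i_n}$ yields $q_Q(e_{i_1}\wedge\cdots\wedge e_{i_n})=e_{i_1}\cdots e_{i_n}$ (Clifford product), so the left-hand side is $n!\,e_{i_1}\cdots e_{i_n}$. For the right-hand side I use that orthogonal vectors anticommute in $\cl(Q)$: by (\ref{eq:bex}) we have $e_{i_a}e_{i_b}+e_{i_b}e_{i_a}=\Phi(e_{i_a},e_{i_b})=2F(e_{i_a},e_{i_b})=0$ for $a\neq b$. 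Reordering a product of pairwise anticommuting, distinct factors multiplies it by the sign of the permutation, so $e_{i_{\sigma(1)}}\cdots e_{i_{\sigma(n)}}=(-1)^\sigma\,e_{i_1}\cdots e_{i_n}$; hence each of the $n!$ summands $(-1)^\sigma e_{i_{\sigma(1)}}\cdots e_{i_{\sigma(n)}}$ equals $e_{i_1}\cdots e_{i_n}$, and the right-hand side is $n!\,e_{i_1}\cdots e_{i_n}$ as well. Both sides therefore agree on a spanning set of $\bigwedge^n(V)$, which proves (\ref{eq:nfac}).

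The step I expect to be the only genuine obstacle is the reduction to an orthogonal basis in full generality, which matters precisely because $V$ need not be finite-dimensional: one must ensure that the symbol and quantization maps, as well as Clifford multiplication, are compatible with passage to the finite-dimensional span $W=\mathrm{span}(y_1,\ldots,y_n)$, i.e. that the Clifford subalgebra generated by $W$ is canonically $\cl(W,Q|_W)$ and that $\lambda_{\pm F}$ respects $\T(W)\subset\T(V)$. Granting this compatibility — which follows from the recursive definition of $\lambda_F$ through the operators $i_x^F$ together with the injectivity of $\cl(W,Q|_W)\to\cl(Q)$ — everything else is the short orthogonal-basis computation above.
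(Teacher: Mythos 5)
Your proof is correct and follows essentially the same route as the paper's: reduce to an orthogonal basis of the finite-dimensional span of $y_1,\ldots,y_n$ (possible since the characteristic is $\neq 2$), then evaluate both sides of (\ref{eq:nfac}) on tuples of orthogonal basis vectors using Corollary \ref{cor:3c1} together with the sign of the permutation (your anticommutation computation via (\ref{eq:bex}) is the same calculation the paper performs by applying $q_Q$ to the permuted version of (\ref{eq:3c1})). The one divergence is minor: where you factor both sides through $\bigwedge^n(V)$ via antisymmetry of the right-hand side (so that repeated indices need no separate check, at the cost of the ``assume $V=W$'' compatibility detour you flag for $\lambda_{\pm F}$ and $\cl(W,Q|_W)\hookrightarrow\cl(Q)$), the paper stays inside $\cl(V,Q)$ throughout, expands by plain multilinearity over the orthogonal basis of $W$, and disposes of tuples with a repeated basis vector by an explicit sign-reversing pairing of permutations --- which shows your compatibility worry, though correctly resolved by the recursion for $\lambda_F$ through $i_x^F$, is in fact avoidable.
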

\begin{proof}
Suppose $y_1,\ldots,y_n\in V$ are arbitrary. Then they span a finite-dimensional vector subspace of $V$, and we can choose there an orthogonal basis $e_\mu$, $\mu=1,2,\ldots,p$, $p\leq n$ (see e.g. Ref. \cite[Th. 1, p. 90]{bourbaki1959}). Since both sides of Eq. (\ref{eq:nfac}) are linear in $y_1,\ldots,y_n$, it is enough to verify the equality on basis vectors. Let us therefore assume that $y_1,\ldots,y_n$ are taken from the set of basis vectors $e_1,\ldots,e_p$. If any two $y_i,y_j$ are identical, then the left hand side of (\ref{eq:nfac}) vanishes because of the properties of the wedge product. But then also the right hand side  vanishes, because, owing to the orthogonality of the basis vectors, we can always arrange the product on the right hand side in such a way that the coinciding vectors are next to each other, and then there will be an identical term with the transposition of the coinciding vectors, with the minus sign - and the two terms will cancel. On the other hand, when all $y_i$ are different vectors from the basis $(e_\mu)$, and this can happen only when $p=n,$ the result follows by applying $q_Q$ to both sides of Eq. (\ref{eq:3c1}) and taking into account the fact that the wedge product changes sign when two of its factors are transposed.
\end{proof}
\section*{Acknowledgment}

The author would like to thank Rafa{\l} Ab{\l}amowicz for his interest and discussion, Robert Coquereaux for reading parts of the manuscript and numerous comments and suggestions, and my wife Laura for reading the manuscript and suggesting many corrections. Thanks are also due to Jacques Helmstetter for answering my questions and providing illuminating critical comments.


\begin{thebibliography}{99}
\bibitem{oziewicz97} Oziewicz, Z.: Clifford Algebra of
Multivectors, Proc. Int. Conf. on the Theory of the Electron
(Cuautitlan, Mexico, 1995). Eds. J.~Keller and Z.~Oziewicz,
Adv. in Appl. Clifford Alg. {\bf 7} (Suppl.), 467--486 (1997)

\bibitem{ablamowiczlounesto}
Ab{\l}amowicz, R., Lounesto, P.: On Clifford Algebras of a Bilinear
Form with an Antisymmetric Part, in ``Clifford Algebras with Numeric and
Symbolic Computations".
Eds. R. Ab{\l}amowicz, P. Lounesto, J.M. Parra, Birkh\"{a}user, Boston, (1996),
167--188.

\bibitem{fauser97}Fauser, B.: Clifford Geometric Parameterization
of Inequivalent Vacua, Mathematical Methods in the Applied Sciences {\bf 24}, p. 885-912 (2001).

\bibitem{ablamowiczfauser} Fauser, B., Ab{\l}amowicz, R.: On the Decomposition of Clifford Algebras of Arbitrary Bilinear Form, In "Clifford Algebras and their Applications in Mathematical Physics: Volume 1: Algebra and Physics", Birkh\"{a}user (2000) (see also arXiv:math/9911180, though with a different numeration of sections and equations.)

\bibitem{ablam2014}Ab{\l}amowicz, R., Gon\c{c}alves, I., da Rocha, R.: Bilinear Covariants and Spinor Fields Duality in Quantum Clifford Algebras, J. Math. Phys. {\bf 55} 103501 (2014)

\bibitem{chevalley1996}Chevalley, C.: The Algebraic Theory of Spinors and Clifford Algebras: Collected
  Works, Volume 2, Springer (1996), reprint of Chevalley, C.: The algebraic theory of spinors, Columbia University Press (1954)

\bibitem{bourbaki1959}Bourbaki, N.:Alg\`{e}bre, Chapitre 9, Formes sesquilin\'{e}ares et Formes Quadratiques, Hermann (1959)

\bibitem{meinrenken2013}Meinrenken, E.: Clifford Algebras and Lie Theory, Springer (2013)

\bibitem{Northcott2009}Northcott, D.~G.: Multilinear Algebra, Cambridge University Press, (2009)

\bibitem{bleecker2005}Bleecker, D.: Gauge Theory and Variational Principles, Dover (2005)

\bibitem{percacci1986}Percacci, R.: Geometry of Nonlinear Field Theories, World Scientific (1986)

\bibitem{obukhov1985}Ivanenko, D., Obukhov, Yu. N.: Gravitational Interaction of Fermion Antisymmetric Forms, Ann. der. Physik, {\bf 42}, 59-70 (1985)

\bibitem{obukhov1994}Obukhov, Yu. N., Solodukhin, S. N., Dirac Equation and the Ivanenko-Landau-K\"{a}hler Equation, Int. J. Theor. Phys. {\bf 33}, 225-245 (1994)

\bibitem{lounesto2001}Lounesto, P.: Clifford Algebras and Spinors, Secon Edition, Cambridge University Press (2001)

\bibitem{obukhov2017}Itin, Y., Hehl, F. W., Obukhov, Yu. N.: Premetric Equivalent of General Relativity, Phys. Rev. D {\bf 95}, o84020 (2017)

\bibitem{graf1978}Graf, W.: Differential Forms as Spinors, Ann. I.H.P {\bf 29}85-109 (1978)

\bibitem{Bourbaki1998}Bourbaki, N.: Algebra I: Chapters 1-3, Springer (1998)

\bibitem{greub1978}Greub, W.: Multilinear Algebra, 2nd ed, Springer (1978)

\bibitem{Bourbaki2006}Bourbaki, N.: Alg\`{e}bre -9, Springer-Verlag GmbH (2006)

\bibitem{dieu3}Dieudonn\'{e}, J.: Treatise on Analysis Volume III, Academic Press (1972)

\bibitem{Helmstetter2008}Helmstetter, J., Micali, A.: Quadratic Mappings and Clifford Algebras,
Birkh\"{u}ser (2008)

\bibitem{bourbaki_ru}Bourbaki, N.: Algebra. Moduli, kolca, formy, Nauka, Moskva (1966), in Russian

\bibitem{crumeyrolle}Crumeyrolle, A.: Diracian Theory of Fields - Spinorgravitation, Rep. Math. Phys. {\bf 31}, p. 29 (1992)

\bibitem{chevalley1956}Chevalley, C.: Fundamental Concepts of Algebra, Academic Press (1956)

\bibitem{bourbaki47}Bourbaki,N.: Alg\`{e}bre Chapitres 4 \`{a} 7, Springer (2007)

\bibitem{sulanke2008}Onishchik, A.~L., Sulanke, R.: Algebra und Geometrie: Moduln und Algebren,
VEB Deutscher Verlag der Wissenschaften (1988)

\bibitem{gilbert91}Gilbert, J. E, Murray M. A. M.: Clifford algebras and Dirac operators in harmonic analysis, Cambridge University Press (1971)

\bibitem{erdmann}Erdmann, K., Holm, T.: Algebras and Representation Theory, Springer (2018)

\bibitem{oziewicz86}Oziewicz, Z.: From Grassmann to Clifford, In Clifford Algebras and Their Applications in Mathematical
  Physics, Reidel (1986)
\end{thebibliography}
\end{document}